\documentclass[a4paper,UKenglish,cleveref,autoref,thm-restate]{lipics-v2021}

\hideLIPIcs
\title{Price of Censorship: Censorship Resistance and Throughput under Rational Concurrent Proposers}

\titlerunning{Censorship Resistance and Throughput under Rational Concurrent Proposers}%

\usepackage{graphicx}%
\usepackage{amsmath, amsthm}
\usepackage{amssymb}
\usepackage{nameref}
\usepackage{float}
\usepackage{xspace}
\usepackage{algorithm}
\usepackage{algpseudocodex}
\usepackage{subcaption}

\newtheorem{property}{Property}

\renewcommand{\paragraph}[1]{\medskip\noindent\textbf{#1}}

\author{Aditya Saraf}{Cornell University, Ithaca, NY, USA}{as2777@cornell.edu}{0000-0003-4466-8144}{}

\author{Ioannis Kaklamanis}{Yale University, IC3, New Haven, CT, USA}{giannis.kaklamanis@yale.edu}{0009-0000-9916-1709}{}

\author{Sarisht Wadhwa}{Duke University, Durham, NC, USA}{sarisht.wadhwa@duke.edu}{0000-0003-4343-9868}{}

\author{Fatima Elsheimy}{Yale University, New Haven, CT, USA}{fatima.elsheimy@yale.edu}{0009-0006-2315-766X}{}

\authorrunning{A. Saraf, I. Kaklamanis, S. Wadhwa and F. Elsheimy }

\Copyright{Aditya Saraf, Ioannis Kaklamanis, Sarisht Wadhwa, and Fatima Elsheimy}

\ccsdesc[500]{Security and privacy~Economics of security and privacy}%

\keywords{blockchains, multiple concurrent proposers, censorship resistance}%

\category{}%

\relatedversiondetails{This is the full version for the paper accepted at AFT '26}{}%

\nolinenumbers%

\newcommand{\E}{\ensuremath{\mathbb{E}}}
\newcommand{\R}{\ensuremath{\mathbb{R}}}
\newcommand{\tx}{\ensuremath{\textit{tx}}\xspace}
\newcommand{\numtransactions}{\ensuremath{m}\xspace}

\newcommand{\adv}{\mathcal{A}}
\newcommand{\pvec}{\mathbf{p}}
\newcommand{\tvec}{\mathbf{t}}
\newcommand{\pmass}{\ell}

\newcommand{\censorprob}{s}
\newcommand{\equil}{\textsc{equilibrium}}
\newcommand{\lambdalow}{\underline{\lambda}}
\newcommand{\lambdahigh}{\overline{\lambda}}
\newcommand{\game}{G}
\newcommand{\tip}{t}
\newcommand{\fclamp}{\hat{f}}

\EventEditors{Aggelos Kiayias and Maria Kyropoulou}
\EventNoEds{2}
\EventLongTitle{8th Conference on Advances in Financial Technologies (AFT 2026)}
\EventShortTitle{AFT 2026}
\EventAcronym{AFT}
\EventYear{2026}
\EventDate{October 6--9, 2026}
\EventLocation{London, UK}
\EventLogo{}
\SeriesVolume{395}
\ArticleNo{11}

\supplementdetails[subcategory={Source code and data}]{Dataset}{https://zenodo.org/records/21831640}

\begin{document}

\maketitle

\begin{abstract}

   Censorship resistance is the defining advantage of blockchains over their centralized counterparts. Yet block proposers censor transactions for many reasons, from legal consequences to economic incentives. We study \emph{economically-incentivized} censorship, modeled by an adversary who bribes proposers to exclude a target transaction, and define the \emph{economic censorship resistance} (eCR) of a transaction as the adversary's expected cost of successful censorship divided by the user's expected payment for inclusion. Single-proposer systems are structurally weak by this measure: under a first-price auction the adversary need only match the user's bid, and fee burning pushes eCR to a few percent of what the user pays.

We therefore turn to multiple concurrent proposers (MCP), where block capacity is divided among $n$ proposers and the block is the union of their sub-blocks.While MCP can substantially increase the cost of censorship by requiring the adversary to bribe many proposers, it also introduces transaction duplication, reducing throughput. The resulting trade-off depends critically on the transaction fee mechanism (TFM), which determines how fees are shared among competing proposers.

    We create a game theoretic model where validators construct blocks from a shared mempool, subject to an adversary's bribery attempt. We provide an algorithm that solves for the mixed equilibrium of a given mempool, which is characterized by the probability of including each transaction. This algorithm works for a wide class of TFMs, and allows us to calculate the expected throughput and censorship resistance for any bid distribution. We then use simulations to show how the eCR and throughput vary as the number of proposers increases. We compare three TFMs, finding that the \textit{duplication-penalizing} TFM dominates the others across many settings. We also validate our findings with empirical Ethereum data.

    We find that MCP systems enjoy much higher eCR than single proposer systems. While burnt user fees can significantly lower eCR, eCR scales linearly with the number of validators, and thus strong eCR can be achieved even in the presence of significant burn. Overall, our results show that MCP systems are, with the correct TFM, significantly more robust to censorship.
\end{abstract}

\section{Introduction}
\label{sec:introduction}
Blockchains are decentralized across many blocks, but within a single block they are anything but: the current proposer is a monopolist with complete control over the block's contents. Decentralization is meant to give every transaction in the mempool a ``fair chance'' at inclusion, relative to its bid, unlike centralized systems, where the operator can simply refuse to process a payment or a post~\cite{eff_financial_censorship}. In practice, the single-slot monopoly undermines this promise. Wahrst\"atter et al.~\cite{WahrstaetterErnstbergerYaish2024BlockchainCensorship} measure that 46\% of post-merge Ethereum blocks were produced by actors censoring OFAC-sanctioned transactions, and as of early 2026, live dashboards report that roughly 35\% of Ethereum builders and half of all relays engage in some form of transaction censorship~\cite{censorship-website}. These measurements suggest that transaction censorship is a persistent feature of today's blockchain ecosystem.

Some of this censorship is legally motivated, and its cost is mainly latency: a transaction censored by regulated proposers waits until a validator in a different jurisdiction proposes. Economically motivated censorship is more corrosive, because the censor can profit from every slot of delay. Consider a participant in an on-chain auction, a DeFi liquidation, or an arbitrage race: excluding a competitor's transaction for even a few blocks can be worth orders of magnitude more than the fees that transaction pays. Since a bribe only needs to outbid the proposer's \emph{fee} from including the transaction, not the value the transaction protects, censorship is profitable precisely when value at stake far exceeds fees, which is the defining regime of maximal extractable value (MEV). An attacker can thus censor a competitor indefinitely by repeatedly offering each proposer a modest bribe.

In this paper, we focus on such \emph{economically incentivized censorship}, which we model as an adversary who publicly offers every proposer a bribe for excluding a target transaction. We quantify a system's resilience by its \emph{economic censorship resistance} (eCR): the ratio between the adversary's expected total expenditure for successful censorship and the fee the user expects to pay for inclusion. For example, if a user expects to pay 1 ETH, but an adversary must spend 5 ETH to censor the transaction, then the eCR is 5.

The eCR of existing designs is inherently limited. With a single proposer per block and a first-price auction transaction fee mechanism (TFM), the eCR is at most $1$: the adversary simply matches the user's bid. Fee burning makes matters strictly worse. On Ethereum, the proposer receives only the tip while the base fee is burned; in typical network conditions, tips are on the order of $0.5$--$2$ gwei against base fees of $10$--$50$ gwei, yielding $\text{eCR} \approx \frac{\text{tip}}{\text{base fee} + \text{tip}} \in [0.02, 0.1]$. An adversary can censor a transaction on today's Ethereum by paying a few percent of what the user pays.

\paragraph{Multiple concurrent proposers}
To achieve higher censorship resistance, we must break the proposer's monopoly by having multiple concurrent proposers (MCP). MCP designs range from research protocols such as MirBFT~\cite{stathakopoulou-mirbft-21}, to DAG-based systems deployed in production by Sui and Aptos~\cite{mystenlabs2022suiwhitepaper,babel_mysticeti_2024,aptos2025whitepaper}, to active proposals for Ethereum and Solana~\cite{mcp}.

We propose a simple MCP model that captures many of these systems: the total block capacity is divided among $n$ concurrent proposers, and the final block is the union of their sub-blocks. Some transactions may then be included by multiple proposers, which yields two complications. First, the TFM must dictate how a transaction's fee is split among the proposers.~\cite{garimidi2025transactionfeemechanismdesign} analyzes a TFM that pools all tips and divides them equally among proposers, regardless of who includes which transactions. ~\cite{Fox2023Auctions} solicits a large and a small tip from each user: a transaction's sole includer receives the large tip, while multiple includers each receive the small one. As the Ethereum calculation above suggests, this choice heavily affects censorship resistance. We mostly analyze an extreme version of the TFM of~\cite{Fox2023Auctions}, which we term the \emph{duplication-penalizing} TFM: it solicits a single tip and pays it to the includer only if the transaction is included exactly once. Our equilibrium analysis, however, holds for a large family of TFMs including all these examples.

The second complication is that multiple inclusions cause a loss of \textit{throughput}, which we define as the percentage of \textit{unique} transactions in a block. There is a natural tension between censorship resistance and throughput: doubling the number of proposers increases the amount the adversary must spend, but also creates more opportunities for duplication. The TFM again matters, as different TFMs discourage duplication to different degrees.

MCP systems are adopted precisely for these two advertised benefits: higher throughput, from parallel block production, and censorship resistance, from the fact that no single proposer can unilaterally exclude a transaction. But a design cannot simply maximize both. The redundancy that protects transactions manifests, in equilibrium, as duplication that consumes the very capacity parallelism creates, and where a system lands on this frontier depends directly on its transaction fee mechanism. This dependence frames the central question of our work:
\begin{quote}
    \emph{Which transaction fee mechanism best navigates the trade-off between censorship resistance and throughput, and how do both scale with the number of concurrent proposers?}
\end{quote}

\subsection{Main contributions}
To answer this question, we perform a game-theoretic analysis: due to the possibility of transaction duplication, a proposer's utility from their block depends on the other proposers. Our threat model is deliberately simple and stated up front: the adversary announces a public, unconditional bribe paid to any proposer that excludes the target transaction; proposers are rational, do not collude (collusion is typically equivalent to a single proposer), and build their blocks simultaneously from a shared mempool; and user tips are exogenous. Richer adversaries (e.g., bribes conditional on successful censorship) and endogenous user bidding are natural extensions that we discuss but do not model.

We solve this game algorithmically for its unique symmetric mixed Nash equilibrium, which specifies the probability of including each transaction as a function of the bids. Our algorithm is a one-dimensional bisection, is efficient, and applies to a wide family of TFMs. A ``cut-and-paste'' argument then yields closed-form expressions for the bribe required to censor any target transaction with any target probability.

With this machinery, we compute the expected throughput, user payments, and cost of bribery for any bid distribution. On synthetic (Pareto) bids, we vary the number of proposers, the block congestion, the adversary's target censorship probability, and the amount of burn, comparing three TFMs: duplication-penalizing, even-split, and collective. The duplication-penalizing TFM consistently dominates the other two on both eCR and throughput. We then validate these findings on the empirical tip distribution of $10{,}000$ recent Ethereum blocks.

Quantitatively, we find that eCR scales linearly with the number of proposers, while throughput quickly plateaus near the optimistic bound achieved by uncoordinated uniform sampling; thus, beyond the one-time duplication cost of moving away from a single proposer, adding proposers buys censorship resistance at essentially no additional throughput cost. This scaling survives fee burning: at Ethereum's early-2026 burn levels (base fee ${\approx}\,8.7\times$ the average tip), a single proposer yields $\text{eCR} \approx 0.09$, whereas $49$ concurrent proposers yield $\text{eCR} \approx 2.5$; the adversary must outspend the user rather than pay a few percent of their fee. Overall, our results show that MCP, equipped with the right TFM, is a promising structural remedy to economic censorship.

\section{Related Work}
\label{sec:related-work}
\paragraph{Protocol-level censorship resistance.}
Classical analyses of blockchain protocols establish \emph{liveness}: every valid transaction is eventually included~\cite{GarayKiayiasLeonardos2015,yin-hotstuff-2019,miller-honeybadger-2016}; see~\cite{AlposDavidKamarinakisZindros2025} for a survey of the censorship-resistance guarantees of widely-used consensus protocols. Eventual inclusion, however, is too weak for financial applications, motivating \emph{short-term} censorship resistance in the Byzantine model. BigDipper~\cite{xue-bigdipper-2023} adds short-term censorship resistance to leader-based BFT, and Abraham, Efron, and Ren~\cite{AbrahamEfronRen2025Latency} prove inherent latency costs of censorship resistance. Closest to our setting, concurrent work by Elsheimy et al.~\cite{ElsheimyKWPZ2026} formalizes the censorship-resistance/throughput trade-off in multi-proposer BFT protocols with up to $f$ Byzantine parties, and designs transaction-assignment rules that navigate this trade-off. This line of work bounds the \emph{worst-case delay} that faulty parties can impose; in contrast, we assume all proposers are rational and instead \emph{price} censorship under bribery. The two perspectives are complementary: Byzantine analyses cover arbitrary deviations by a bounded fraction of parties, while ours covers economically-motivated deviations by all parties.

\paragraph{Transaction fee mechanisms.}
Transaction fee mechanisms (TFMs) in blockchain systems have evolved from first-price auctions (as in Bitcoin) to hybrid models like Ethereum’s EIP-1559. Roughgarden’s~\cite{Roughgarden2021TFM} analysis of EIP-1559 showed that it steers the system toward market-clearing prices under demand, but also that no mechanism can be both fully strategy-proof and resistant to collusion or bribery. Chung and Shi~\cite{ChungShi2023TFM} sharpened this into a formal impossibility: no non-trivial TFM can simultaneously be incentive-compatible for users and for the block producer while resisting producer--user collusion. These results define a benchmark for later designs: TFMs must trade off between efficiency, incentive-compatibility, and resistance to strategic censorship or bribery.

Recent research has begun to extend TFM design to multi-proposer settings. One line of work, such as FOCIL (Fork-Choice Enforced Inclusion Lists), uses committees of validators to construct transaction inclusion lists before block building~\cite{noauthor_fork-choice_2024}. However, how to fairly allocate fees among multiple proposers remained an open question. Stouka et al.~\cite{StoukaMaThiery2025MultipleProposerTFM} introduce a game-theoretic model for multi-proposer blockchains (motivated by Ethereum’s FOCIL design). They extend Roughgarden’s canonical model by adding stages where several proposers each choose which transactions to include and by introducing a bribing adversary that can pay proposers to censor a transaction. Using this framework, they design and evaluate concrete multi-proposer TFMs (e.g. “Double” and “Single” TFMs) that maintain the Bayesian incentive-compatibility guarantees of EIP-1559 while greatly raising the cost of censorship. In effect, Stouka et al. show that appropriate fee-splitting rules can preserve honest behavior and enforce inclusion even under bribery. Notably, their adversary can condition bribes on successful censorship, which is generally cheaper for the adversary than the unconditional bribes we analyze; extending our equilibrium computation to conditional bribes is a natural direction for future work (see~\cref{sec:conclusion}).

 Fox et al.~\cite{Fox2023Auctions} and others quantify the “cost of censorship” in economic terms and demonstrate that multi-proposer TFMs can raise this cost by distributing responsibility for inclusion. Some mechanisms reward proposers based on the number of independent inclusions of a transaction, effectively discouraging selective omission~\cite{Fox2023Auctions,StoukaMaThiery2025MultipleProposerTFM}. At the application layer, Alpos et al.~\cite{AlposHeimbachNayakWadhwa2025SealedBid} design censorship-resistant sealed-bid auctions, complementing the auction-driven motivation of~\cite{Fox2023Auctions}.

\paragraph{Bribery attacks and censorship in practice.}
Bribery attacks on consensus have a long history: Bonneau~\cite{Bonneau2016Bribery} observed that adversaries can \emph{rent} rather than buy consensus power, McCorry et al.~\cite{McCorryHicksMeiklejohn2018} constructed smart contracts that trustlessly bribe miners, Winzer et al.~\cite{WinzerHerdFaust2019} analyzed temporary censorship attacks against rational miners, Sun et al.~\cite{SunRuanSu2020} give a practical quantification of bribery attacks, and Karakostas et al.~\cite{KarakostasKiayiasZacharias2024} study the efficacy of counterincentives. Closest to our notion of economic censorship, Berger et al.~\cite{berger} study bribery attacks against optimistic-rollup fraud proofs. Their setting considers repeated censorship over many sequential blocks, whereas we study the cost of censoring a transaction within a single block under rational proposer behavior. The two works are complementary: their analysis assumes a per-block censorship cost, while our eCR metric quantifies exactly that cost. Nor is censorship hypothetical: beyond the measurements of Wahrst\"atter et al.~\cite{WahrstaetterErnstbergerYaish2024BlockchainCensorship}, empirical studies of Ethereum's proposer--builder separation document widespread filtering by builders and relays~\cite{HeimbachKifferTorresWattenhofer2023PBS}, heavy concentration of the builder market~\cite{YangNayakZhang2025Builder}, and the centralizing role of private order flow~\cite{WangEtAl2024PrivateOrderFlow}---the latter directly motivating the private-mempool extension we discuss in~\cref{sec:conclusion}.
 
In parallel, new multi-leader consensus protocols have been proposed to boost throughput and resilience. Building on DAG mempools and consensus~\cite{DanezisNarwhal2022,spiegelman_bullshark_2022,GaiNiuEtAl2023}, DAG-based or “leaderless” designs (such as those used in Avalanche~\cite{avalanche2020docs}, Sui’s Mysticeti~\cite{babel_mysticeti_2024}, or Aptos~\cite{aptos2025whitepaper}) allow every validator to propose blocks simultaneously, vastly improving throughput compared to single-leader chains. Garimidi, Neu, and Resnick~\cite{mcp} make the general case for multiple concurrent proposers and design a protocol achieving selective-censorship resistance together with transaction hiding. Garimidi et al.~\cite{garimidi2025transactionfeemechanismdesign} formalize fee mechanisms for such leaderless protocols: they define a “strongly BPIC” equilibrium concept and propose the FPA-EQ mechanism, which splits first-price auction revenue equally among all contributing blocks. They prove FPA-EQ is strongly incentive-compatible and guarantees at least a 63.2\% fraction of the optimal welfare at equilibrium. These works (and others on Avalanche and Narwhal) highlight that multiple concurrent proposals require fundamentally different TFM designs, but can achieve higher performance and improved censorship resistance if fees are shared appropriately.

Collectively, these works highlight a growing interest in mechanisms that jointly optimize censorship resistance, incentive compatibility, and throughput in decentralized settings. Our work contributes to this space by modeling bribery and duplication in MCP systems and analyzing the equilibrium effects of different TFMs.

\paragraph{Comparison to inclusion lists.}
Inclusion lists are a powerful tool for censorship resistance. They originate from a line of Ethereum research on censorship under proposer--builder separation~\cite{Buterin2021CRunderPBS}, refined through forward inclusion lists~\cite{DAmato2022ForwardIL}, EIP-7547~\cite{EIP7547}, the ``no free lunch'' design~\cite{ButerinNeuder2023NoFreeLunch}, and committee-enforced inclusion sets~\cite{ThieryDAmatoMonnot2024COMIS}, culminating in FOCIL~\cite{noauthor_fork-choice_2024}, now drafted as EIP-7805~\cite{EIP7805}. Their main strength lies in their strictness; proposers sacrifice all of their reward for censoring transactions on the inclusion list. This is also a weakness. Proposers in certain jurisdictions may be legally required to censor certain transactions. By preventing this, inclusion lists may result in more geographic centralization.

A related recent proposal is AUCIL (Auction-based Inclusion Lists)~\cite{wadhwa_aucil_2025}, which is designed for committee-based inclusion. In AUCIL, a committee of proposers each compiles an input list of transactions where the number of overlaps is based on the fee paid by the transaction, and then an on-chain auction algorithm merges these lists into a final committed inclusion set. The key idea is that by having each committee member assigned certain transactions (using a correlated equilibrium based algorithm), the protocol ensures that high-paying transactions appear on at least one list. An adversary wishing to censor a transaction must therefore bribe all proposers who listed it, which is prohibitively expensive. In practice, AUCIL's auction step selects as many of the committee input lists as possible to form the block, thereby ``authentically'' binding each proposer to its list and making censorship very costly. This mechanism complements the multi-proposer TFMs above by enforcing that committed transactions are actually included, further strengthening censorship resistance in multi-proposer or proposer-builder systems. As a point of comparison to our design, AUCIL employs the even-split TFM, one of the admissible TFMs analyzed in this work.

In contrast, MCP presents a weaker, but more flexible, tool for censorship resistance. Proposers can still censor transactions if legally necessary, but that makes such transactions even more lucrative for proposers who can include it, helping to avoid geographic centralization. Also, MCP is desired for other reasons; namely, it can reduce MEV by limiting the ability of proposers to control the exact ordering of the final block. In that sense, our work analyzes the ``free'' censorship resistance that comes along with MCP; beyond the consensus protocol and TFM, no additional mechanism is required.

\section{Model and Definitions}
\label{sec:model}
We now explain our model. There are $n$ proposers who collectively build a block of size~$B$. Each proposer submits a sub-block of at most $\ell := B/n$ transactions (we assume for simplicity that $n$ divides $B$). These transactions are chosen from a \textit{shared mempool}. We denote this by the sorted list $\{t_1, t_2, \dots, t_m\}$, where $t_i$ refers to the tip of transaction $i$, and $t_1 \ge t_2 \ge \dots \ge t_m$. We sometimes abuse notation to use $t_i$ to refer to the tip or the transaction itself, when it is clear from the context.

The union of the proposers' sub-blocks form the final list of executed transactions. The TFM then portions out the proposers' payments as a function of the full list of sub-blocks. We consider the following three TFMs in this paper:
\begin{enumerate}
    \item the \textit{duplication-penalizing} TFM. If a transaction $t_i$ is included by only one proposer, then the proposer who included it is paid the full $t_i$. If a transaction is included by more than one proposer, then the transaction is not charged, and no tip is given to the proposers.
    \item the \textit{even-split} TFM. If a transaction $t_i$ is included by $k$ different proposers, then it is charged $t_i$, and every proposer that includes $t_i$ is paid $t_i/k$.
    \item the \textit{collective} TFM. All transactions are charged their tips, which are pooled and shared equally among \textit{all} proposers.
\end{enumerate}
Further, we assume that the TFM can specify a \textit{base fee} $\beta$ that is burnt. So, each included transaction pays a flat fee $\beta$ as well as its payment from the auction.

\begin{remark}[User payments and revenue]\label{rem:user-side}
Our analysis takes tips as exogenous and studies proposer incentives; we do not model users' bidding behavior or user--proposer collusion. Two features of the duplication-penalizing TFM deserve emphasis in this light. First, duplicated transactions are executed free of charge, so this TFM forgoes fee revenue relative to the even-split TFM, which charges every included transaction; our eCR metric reflects this through lower expected user payments. Second, a user able to coordinate with even a single proposer could deliberately trigger duplication to avoid paying. The two-tier tips of Fox et al.~\cite{Fox2023Auctions}, of which our duplication-penalizing TFM is the extreme case, mitigate both effects by charging a small tip upon duplicated inclusion; such intermediate TFMs are admissible in the sense of \cref{def:admissible}, so our equilibrium framework applies to them verbatim. We analyze the extreme case for clarity, and leave a user-side equilibrium analysis to future work.
\end{remark}

We assume the presence of an adversary who tries to censor a single transaction, $t_c$. The adversary offers an unconditional bribe of $b$ to all proposers who do not include $t_c$. This bribe is common knowledge; all proposers know that all other proposers have been offered the same bribe. 

The utility of a proposer is the sum of their payments, both from transactions and possibly from the bribe. Each proposer's strategy space is all subsets of size $B/n$ from the mempool. Our solution concept is a symmetric, mixed Nash equilibrium, which specifies the probability of sampling \textit{each subset} of transactions. But for our purposes, the only relevant feature of the equilibrium is its \textit{induced probabilities} over the transactions, rather than the full distribution over all subsets. We thus represent the equilibrium as $\mathbf{p} = (p_1, \dots, p_m)$, where $p_i$ is the probability that the proposer includes transaction $i$ in equilibrium.

For the following definitions, we fix the number of proposers, $n$, the total block size, $B$, the mempool, $\{\tx_1, \dots, \tx_m\}$, and the TFM, $\mathbf{T}$, in which each proposer that includes a transaction $\tx_i$ receives $\mathbf{T}(\tip_i, n_i)$, where $\tip_i$ is the transaction's tip and $n_i$ is the number of \emph{other} proposers that also include it (so a sole includer receives $\mathbf{T}(\tip_i, 0)$). In certain cases, even without including the transaction, the proposer can gain a certain value.  This induces a unique equilibrium profile, $\mathbf{p}$, as we will show later.
\begin{definition}[fixed-mempool throughput]\label{def:throughput-fixed}
    The \textit{throughput} of the current mempool is the expected number of unique transactions in the final block (given that all proposers play $\mathbf{p}$), divided by $\min(B, m)$.
\end{definition}
So, the maximum throughput occurs when all transactions are included, or, if that is not possible, when the block is full with no duplicate transactions. 

Finally, we turn to economic censorship resistance, defining it first on a per-transaction basis, assuming a fixed mempool state.
\begin{definition}[transaction cost]\label{def:tx-cost}
    The \textit{transaction cost} of $\tx_i$ is its expected payment \emph{conditional on inclusion} in the final block: the flat base fee $\beta$ plus the expected auction payment of $\tx_i$, in the equilibrium induced by the mempool state and the TFM, conditioned on the event that $\tx_i$ appears in the final block.
\end{definition}
This is the quantity that a user who ends up included actually pays; transactions that are excluded pay nothing, and normalizing by the conditional payment makes the eCR of transactions with different inclusion probabilities comparable.
To define the cost of bribery, we first assume that the adversary's goal is for $t_c$ to appear in the final block with probability at most $1-s$. Thus, $s$ is the adversary's desired rejection probability. For ease of exposition, we typically assume that $s = 0.9$, but we later show how our simulation results vary with $s$. We also rely on the fact that, for any bribe $b$, there is a unique equilibrium profile $\mathbf{p}^b$, which we show later.
\begin{definition}[bribery cost]\label{def:bribery-cost}
    Suppose an adversary wishes for $t_c$ to be censored with probability $s$, and let $b^*$ be the minimum per-proposer bribe such that, under the induced equilibrium profile $\mathbf{p}^{b^*}$, the probability that $t_c$ is included in the final block is at most $1-s$. The \textit{bribery cost} of transaction $c$ is the adversary's \emph{expected total expenditure}, $n\,(1 - p_c^{b^*})\,b^*$: the bribe is offered to each of the $n$ proposers, and each collects it (by excluding $t_c$) with probability $1 - p_c^{b^*}$.
\end{definition}
\begin{definition}[per-tx eCR]~\label{def:ecnomic-cr-per-tx}
    The economic censorship resistance of transaction $t_i$ is its bribery cost divided by its transaction cost.
\end{definition}
The intuition behind our definition is that transactions which cost more are likely from users with higher valuations, and should thus be harder to censor. Put differently, a bribery cost of 10 and a transaction cost of 1 represents strong censorship resistance; the adversary must spend 10 times the user. If the transaction cost is itself 10, this is no longer as resistant to censorship.

Since all of these definitions depend on the state of the mempool, we add a distributional assumption. So suppose all $t_i$'s are drawn iid from distribution $F$. 
\begin{definition}[throughput]
    The expected throughput under $F$ is the expectation, over mempools $\mathbf{t} \sim F^m$, of the throughput of $\mathbf{t}$.
\end{definition}
\begin{definition}[eCR]
    The expected censorship resistance under $F$ is the expectation, over mempools $\mathbf{t} \sim F^m$, of the average economic censorship resistance of a transaction $t_i \in \mathbf{t}$.
\end{definition}

\section{A Generalized Equilibrium Algorithm}
\label{sec:algorithm}
In this section, we do not consider a fixed TFM or bribe. Instead, we provide an algorithm that solves the symmetric mixed Nash equilibrium, given \textit{payoff functions}, that describe how the utility of a proposer changes as they include a transaction, given the other proposers' probability of including that transaction. Our algorithm works for a range of admissible payoff functions, that include all three TFMs we consider, in the absence or presence of any bribe. Intuitively, an admissible payoff function is one where the utility of including transaction $i$ decreases as the probability that others include that transaction increases. In the simplest case of the duplication-penalizing TFM, this holds as the utility of including a transaction is only nonzero if no one else includes it. 

\subsection{Symmetric Equilibrium via a Linear Program}
\label{sec:lp-characterization}

We characterize symmetric mixed-strategy equilibria of the proposer game as optima of a linear program, and use linear-programming duality to derive a relation between the fee paid by transactions and the probability of its selection. This relation, that all transactions assigned interior probability share a common marginal payoff $\lambda^*$ is then used to prove a symmetric ($\epsilon-$approximate) equilibrium in \Cref{alg:binsearch}.

\paragraph{Setup.}
Consider a set of transactions $\tx_1, \dots, \tx_\numtransactions$, each with a single tip value $\tip_i$. We seek a symmetric mixed strategy: a probability vector $\vec p = (p_1, \dots, p_\numtransactions) \in [0,1]^\numtransactions$ used by every proposer, where $p_i$ is the probability of including $\tx_i$. If, in addition to the proposer under consideration, $n_i$ \emph{other} proposers include $\tx_i$, the proposer's payment for it is $\mathbf{T}(t_i, n_i)$, following the convention of \cref{sec:model}. We assume throughout that $m > \ell$; otherwise the block trivially fits the entire mempool.

For each transaction $\tx_i$, let $\overline{\lambda}_i$ denote the maximum possible payoff to a proposer (usually when it is the sole includer of $\tx_i$), and $\underline{\lambda}_i$ the minimum possible payoff (usually when all proposers include $\tx_i$). Fix a profile of beliefs $\vec p^* \in [0,1]^\numtransactions$ representing what every \emph{other} proposer will do. 

From a proposer's perspective, the expected marginal payoff of including $\tx_i$ is a function of the probability that other proposers select this transaction, and $\tip_i$. Let $u_i$ be the marginal utility for the proposer (utility for including the transaction minus utility for not including the transaction) when including the transaction $\tx_i$. If every \emph{other} proposer includes $\tx_i$ independently with probability $p_i$, then
\begin{align}
\label{eq:marginal-payoff}
u_i = \sum_{n_i = 0}^{n-1}\binom{n-1}{n_i} p_i^{n_i}(1-p_i)^{n-1-n_i}\mathbf{T}(t_i,n_i)
\end{align}
where, with slight abuse of notation, $\mathbf{T}(t_i, n_i)$ in~\eqref{eq:marginal-payoff} denotes the \emph{marginal} payoff---the difference in a proposer's payment between including and not including $\tx_i$---given that $n_i$ others include it. For the duplication-penalizing and even-split TFMs this coincides with the payment itself; for the collective TFM it is the change in the proposer's pool share (\Cref{ex:socialist}).

\paragraph{Best response as a linear program.} \emph{Treating $\vec p^*$ (and hence each $u_i$) as fixed}, a deviator's best response maximizes total expected payoff subject to a block-size budget $\ell$:
\begin{align}
  \label{eq:primal}
  \max_{\vec p} \quad & \sum_{i=1}^\numtransactions u_i\, p_i \\
  \text{s.t.}\quad
  & \sum_{i=1}^\numtransactions p_i = \ell, \notag \\
  & 0 \le p_i \le 1 && \forall\, i \in \{1, \dots, \numtransactions\}. \notag
\end{align}
A symmetric equilibrium then corresponds to a fixed point at which $\vec p = \vec p^*$ is itself optimal for~\eqref{eq:primal} with the marginal payoffs $u_i$ induced by $\vec p^*$ via~\eqref{eq:marginal-payoff}. (We impose the block-size budget with equality; since payments are non-negative, only a bribed transaction can have negative marginal utility, and as $m > \ell$ there are always at least $\ell$ transactions with non-negative marginal utility to fill the sub-block, so this is without loss of generality.)

\begin{property}[Equilibrium structure]\label{prop:equilibrium}
There exists an optimal solution $\vec p^*$ to~\eqref{eq:primal} and a scalar $\lambda^* \in \R$ such that, for every $i \in \{1, \dots, \numtransactions\}$:
\begin{enumerate}
  \item if $u_i > \lambda^*$, then $p_i^* = 1$;
  \item if $u_i < \lambda^*$, then $p_i^* = 0$;
  \item if $0 < p_i^* < 1$, then $u_i = \lambda^*$.
\end{enumerate}
Conversely, any feasible $\vec p$ for which some $\lambda$ satisfies conditions 1--3 is optimal for~\eqref{eq:primal}.
\end{property}

\begin{proof}
We establish the claim through linear-programming duality. Introduce dual variables $a_i \ge 0$ for the upper-bound constraints $p_i \le 1$ and $\lambda \in \R$ for the equality constraint $\sum_i p_i = \ell$. The dual program is
\begin{align}
  \label{eq:dual}
  \min_{\vec a,\, \lambda} \quad & \sum_{i=1}^\numtransactions a_i + \ell\, \lambda \\
  \text{s.t.}\quad
    & a_j + \lambda \ge u_j && \forall\, j \in \{1, \dots, \numtransactions\}, \notag \\
    & a_i \ge 0 && \forall\, i \in \{1, \dots, \numtransactions\}, \notag \\
    & \lambda \in \R. \notag
\end{align}
The primal~\eqref{eq:primal} is feasible (e.g., $p_i = \ell / \numtransactions$ when $\ell \le \numtransactions$) and bounded, so by strong duality both programs admit optima with equal value. Let $(\vec p^*, \vec a^*, \lambda^*)$ denote a pair of optimal primal and dual solutions. Complementary slackness gives, for each $i$,
\begin{align}
  \label{eq:cs-1}
  a_i^*\, (1 - p_i^*) &= 0, \\
  \label{eq:cs-2}
  (a_i^* + \lambda^* - u_i)\, p_i^* &= 0.
\end{align}
We now verify each of the three claims.

\noindent\textbf{Case 1: $u_i > \lambda^*$.}
Suppose for contradiction that $p_i^* < 1$. Then~\eqref{eq:cs-1} forces $a_i^* = 0$, but dual feasibility requires $a_i^* + \lambda^* \ge u_i$, i.e., $a_i^* \ge u_i - \lambda^* > 0$, a contradiction. Hence $p_i^* = 1$.

\noindent\textbf{Case 2: $u_i < \lambda^*$.}
Suppose for contradiction that $p_i^* > 0$. Then~\eqref{eq:cs-2} forces $a_i^* + \lambda^* - u_i = 0$, i.e., $a_i^* = u_i - \lambda^* < 0$, contradicting dual feasibility $a_i^* \ge 0$. Hence $p_i^* = 0$.

\noindent\textbf{Case 3: $0 < p_i^* < 1$.}
Since $p_i^* < 1$,~\eqref{eq:cs-1} gives $a_i^* = 0$. Substituting into~\eqref{eq:cs-2} and using $p_i^* > 0$ yields $\lambda^* - u_i = 0$, i.e., $u_i = \lambda^*$.

For the converse, suppose feasible $\vec p$ and some $\lambda$ satisfy conditions 1--3, and set $a_i := \max(u_i - \lambda, 0) \ge 0$. The pair $(\vec a, \lambda)$ is dual feasible, and conditions 1--3 imply the complementary slackness equations \eqref{eq:cs-1}--\eqref{eq:cs-2}: if $p_i < 1$ then $u_i \le \lambda$ (conditions 2--3), so $a_i = 0$; and if $p_i > 0$ then $u_i \ge \lambda$ (conditions 1 and 3), so $a_i + \lambda = u_i$. Hence $\vec p$ is optimal. \qedhere

\end{proof}

\Cref{prop:equilibrium} reduces the problem of finding a symmetric equilibrium to identifying the threshold $\lambda^*$ and the corresponding inclusion probabilities. The next subsection presents an efficient binary-search algorithm for this task.

\subsection{A Binary-Search Algorithm for Symmetric Equilibria}
\label{sec:binsearch}

Building on \Cref{prop:equilibrium}, we now give an efficient algorithm to compute symmetric equilibria. The idea is to invert the marginal-payoff relationship~\eqref{eq:marginal-payoff}: at a symmetric fixed point $\vec p = \vec p^*$, equation~\eqref{eq:marginal-payoff} couples each $p_i$ to $u_i$, and \Cref{prop:equilibrium} couples each interior $p_i$ to a common threshold $\lambda^*$. Composing these, the equilibrium probability of each transaction is determined by $\lambda^*$ alone, reducing the search to a one-dimensional bisection over $\lambda$.

\paragraph{Setup.} For each transaction $\tx_i$, let $f_i(\lambda)$ denote the symmetric inclusion probability that makes a proposer indifferent at marginal payoff $\lambda$: the value of $p_i$ for which~\eqref{eq:marginal-payoff} yields $u_i = \lambda$ when every other proposer also plays $p_i$. Concretely, for the marginal-payoff form of \Cref{ex:socialist}, $f_i$ admits a closed form; in general it can be computed by inverting~\eqref{eq:marginal-payoff} numerically. By construction, $f_i$ takes the value $1$ at $\lambda = \underline{\lambda}_i$ (full competition) and $0$ at $\lambda = \overline{\lambda}_i$ (monopoly inclusion), matching the bounds defined in~\Cref{sec:lp-characterization}.

\begin{definition}[Admissible payoff function]\label{def:admissible}
A function $f_i \colon \R \to \R$ is \emph{admissible} with parameters $(\underline{\lambda}_i, \overline{\lambda}_i)$ if:
\begin{itemize}
    \item \textbf{(A1)} \textbf{Boundary values:} $f_i(\underline{\lambda}_i) = 1$ and $f_i(\overline{\lambda}_i) = 0$.
    \item \textbf{(A2)} \textbf{Continuity and strict monotonicity:} $f_i$ is continuous and strictly decreasing on $[\underline{\lambda}_i, \overline{\lambda}_i]$.
    \item\textbf{(A3)} \textbf{Range:} $f_i(\lambda) > 1$ for $\lambda < \underline{\lambda}_i$ and $f_i(\lambda) < 0$ for $\lambda > \overline{\lambda}_i$.
\end{itemize}
\end{definition}

\begin{example}[Collective TFM]\label{ex:socialist}
Suppose all fees from included transactions are pooled and divided equally among the $n$ proposers, regardless of who included them. Proposer $j$'s total payoff is
\[
U_j \;=\; \frac{1}{n} \sum_{i=1}^{\numtransactions} T_i \cdot \mathbf{1}\bigl[\tx_i \text{ included by at least one proposer}\bigr].
\]
Because $j$ receives the same share of $T_i$ whether or not she personally includes $\tx_i$ (provided someone does), the \emph{marginal} value of inclusion equals the expected pool contribution conditional on every other proposer omitting $\tx_i$:
\[
\Delta u_i(p_i) \;=\; \frac{T_i}{n}\,(1 - p_i)^{n-1}.
\]
Setting $\Delta u_i(p_i) = \lambda$ and inverting yields
\[
f_i(\lambda) \;=\; 1 - \left(\frac{n\lambda}{T_i}\right)^{\!\!\frac{1}{n-1}},
\qquad \underline{\lambda}_i = 0, \qquad \overline{\lambda}_i = \frac{T_i}{n}.
\]
\emph{Verification of admissibility.}
\textbf{(A1)} $f_i(0) = 1$ and $f_i(T_i/n) = 0$.
\textbf{(A2)} The map $\lambda \mapsto (n\lambda/T_i)^{1/(n-1)}$ is strictly increasing for $\lambda \ge 0$, so $f_i$ is strictly decreasing on $[0, T_i/n]$.
\textbf{(A3)} For $\lambda > T_i/n$ the inner term exceeds $1$, so $f_i(\lambda) < 0$; the natural extension to $\lambda < 0$ gives $f_i(\lambda) > 1$.
\end{example}

\begin{example}[Duplication-penalizing TFM]\label{ex:duplication}
Under this TFM a transaction pays its tip $T_i$ to its includer only when it is included exactly once, so the marginal value of including $\tx_i$ is nonzero only in the event that no other proposer includes it:
\[
u_i(p_i) \;=\; (1 - p_i)^{n-1}\, T_i,
\qquad
f_i(\lambda) \;=\; 1 - \left(\frac{\lambda}{T_i}\right)^{\!\!\frac{1}{n-1}},
\qquad \underline{\lambda}_i = 0, \qquad \overline{\lambda}_i = T_i.
\]
This is the same functional form as \Cref{ex:socialist} with $T_i$ in place of $T_i/n$, and admissibility is verified identically.
\end{example}

\begin{example}[Even-split TFM]\label{ex:even-split}
Suppose a transaction included by $k$ proposers is charged its full tip $T_i$, which is split evenly, so that each includer receives $T_i/k$. If every other proposer includes $\tx_i$ independently with probability $p_i$, the number of \emph{other} includers is $K \sim \mathrm{Bin}(n-1, p_i)$, and the marginal value of inclusion is
\[
u_i(p_i) \;=\; T_i \cdot \E\!\left[\frac{1}{1+K}\right]
\;=\; T_i \sum_{k=0}^{n-1}\binom{n-1}{k}\frac{p_i^{k}(1-p_i)^{n-1-k}}{k+1}
\;=\; T_i\,\frac{1-(1-p_i)^{n}}{n\,p_i},
\]
where the last equality uses $\binom{n-1}{k}\frac{1}{k+1} = \frac{1}{n}\binom{n}{k+1}$ and the binomial theorem, and $u_i(0) := T_i$ by continuity.

\emph{Verification of admissibility.}
\textbf{(A1)} $u_i(0) = T_i$ and $u_i(1) = T_i/n$, so the inverse $f_i$ of $u_i$ satisfies $f_i(\overline{\lambda}_i) = 0$ and $f_i(\underline{\lambda}_i) = 1$ with $\overline{\lambda}_i = T_i$ and $\underline{\lambda}_i = T_i/n$.
\textbf{(A2)} Increasing $p_i$ makes $K$ stochastically larger, and $k \mapsto 1/(1+k)$ is strictly decreasing, so $u_i$ is strictly decreasing on $[0,1]$ for every $n \ge 2$; hence its inverse $f_i$ is strictly decreasing on $[\underline{\lambda}_i, \overline{\lambda}_i]$.
\textbf{(A3)} Extend $f_i$ outside $[\underline{\lambda}_i, \overline{\lambda}_i]$ by any strictly decreasing continuous extension (e.g., affinely); the algorithm only ever uses the clamped function $\fclamp_i$, which is unaffected by the choice of extension.

Unlike the collective TFM, $u_i$ here admits no closed-form inverse. This poses no obstacle: \Cref{alg:binsearch} only requires \emph{pointwise evaluation} of $f_i$, and since $u_i$ is continuous and strictly decreasing, $f_i(\lambda)$ can be evaluated to any accuracy $\varepsilon'$ by an inner bisection on $[0,1]$, adding a factor $O(\log(1/\varepsilon'))$ to the per-evaluation cost in \Cref{thm:termination}.

\emph{A tractable approximation.} Replacing $\E[1/(1+K)]$ by $1/(1+\E[K])$ yields
\[
\tilde{u}_i(p_i) \;=\; \frac{T_i}{1 + (n-1)p_i},
\qquad
\tilde{f}_i(\lambda) \;=\; \frac{\frac{T_i}{\lambda} - 1}{n-1},
\]
which shares the boundary values $\tilde{u}_i(0) = T_i$ and $\tilde{u}_i(1) = T_i/n$ and has a closed-form inverse. By Jensen's inequality, $\tilde{u}_i(p_i) \le u_i(p_i)$, so this approximation \emph{understates} the value of inclusion at interior points. Since $\tilde{f}_i$ is itself admissible with the same parameters, all of our structural results apply to it verbatim; we use it only for intuition and closed-form expressions, while all reported numerical results are computed with the exact $u_i$.
\end{example}

\paragraph{The clamped aggregate.} For any candidate $\lambda \in \R$, define the \emph{clamped total inclusion mass}
\[
G(\lambda) = \sum_{i=1}^{\numtransactions} \min\bigl(\max(f_i(\lambda),\, 0),\, 1\bigr).
\]
By admissibility, the $i$th term equals $1$ for $\lambda \le \underline{\lambda}_i$, equals $0$ for $\lambda \ge \overline{\lambda}_i$, and is strictly decreasing in between. Hence $G$ is non-increasing on $\R$, with $G(\min_i \underline{\lambda}_i) = \numtransactions$ and $G(\max_i \overline{\lambda}_i) = 0$. Combining \Cref{prop:equilibrium} with the definition of $f_i$, finding a symmetric equilibrium reduces to locating $\lambda^*$ with $G(\lambda^*) = \ell$, where $\ell$ is the target block size. Since $G$ is monotone, a one-dimensional bisection suffices.

\begin{remark}[Marginals suffice]\label{rem:marginals}
Two observations justify working with inclusion probabilities rather than full distributions over sub-blocks. First, since proposers randomize independently and a proposer's payoff is additive over transactions, with the term for $\tx_i$ depending on the others only through the number of them that include $\tx_i$ (distributed $\mathrm{Bin}(n-1, p_i)$), opponents' strategies affect a proposer's payoff only through their marginals; hence the best-response problem over distributions on $\ell$-subsets reduces to the linear program~\eqref{eq:primal} over marginals. Second, every feasible point of~\eqref{eq:primal} is realizable: for any $\vec p \in [0,1]^\numtransactions$ with $\sum_i p_i = \ell \in \mathbb{N}$ there is a random $\ell$-subset $S$ with $\Pr[i \in S] = p_i$. (Systematic sampling: lay consecutive intervals of lengths $p_1, \dots, p_\numtransactions$ on a circle of circumference $\ell$, draw $x$ uniformly from $[0,1)$, and select the transactions whose intervals contain a point $x + k$ for $k = 0, \dots, \ell-1$; since each $p_i \le 1$, no interval contains two such points, and exactly $\ell$ transactions are selected.) Together, these imply that optimal marginal vectors correspond exactly to best responses in the original strategy space.
\end{remark}

\begin{theorem}[Existence and uniqueness]\label{thm:uniqueness}
For any set of transactions $\tx_1, \dots, \tx_\numtransactions$ with admissible payoff functions $f_1, \dots, f_\numtransactions$ and any target block size $\ell \in (0, \numtransactions)$, there exists $\lambda^*$ with $G(\lambda^*) = \ell$, and $\mathbf{p}^* = (\fclamp_1(\lambda^*), \dots, \fclamp_\numtransactions(\lambda^*))$ is a symmetric mixed Nash equilibrium. Moreover, the equilibrium profile is unique: any two thresholds with clamped inclusion mass $\ell$ induce the same probability vector $\mathbf{p}^*$.
\end{theorem}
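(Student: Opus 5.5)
We follow the reduction already set up in the text: existence via an intermediate value argument on $G$, equilibrium via the converse direction of \Cref{prop:equilibrium} together with \Cref{rem:marginals}, and uniqueness via the structure of $G$'s level sets. Throughout, $\fclamp_i(\lambda) := \min(\max(f_i(\lambda),0),1)$.

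\textbf{Existence of $\lambda^*$.} First show that each $\fclamp_i$ is continuous on $\R$. By (A2) it is continuous on $[\underline{\lambda}_i,\overline{\lambda}_i]$. By (A1) and (A3) it equals the constant $1$ to the left of this interval and $0$ to the right, and these constants match the boundary values. Hence $G$ is continuous and non-increasing. Take $\lambda_{\mathrm{lo}} = \min_i \underline{\lambda}_i$ and $\lambda_{\mathrm{hi}} = \max_i \overline{\lambda}_i$. Then $G(\lambda_{\mathrm{lo}}) = \numtransactions > \ell$ and $G(\lambda_{\mathrm{hi}}) = 0 < \ell$, so the intermediate value theorem gives $\lambda^*$ with $G(\lambda^*) = \ell$.

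\textbf{Equilibrium.} Set $p_i^* = \fclamp_i(\lambda^*)$. This vector lies in $[0,1]^\numtransactions$ and sums to $\ell$, so it is feasible for~\eqref{eq:primal}. Let $u_i$ be the marginal payoff induced by $p_i^*$ through~\eqref{eq:marginal-payoff}. We must check conditions 1--3 of \Cref{prop:equilibrium} with $\lambda = \lambda^*$. The key fact is that $u_i$, viewed as a function of $p_i$, is the inverse of $f_i$ on $[\underline{\lambda}_i,\overline{\lambda}_i]$, and it is strictly decreasing in $p_i$ with $u_i(0)=\overline{\lambda}_i$ and $u_i(1)=\underline{\lambda}_i$. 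Three cases follow.
\begin{itemize}
\item If $0<p_i^*<1$, then $\lambda^*$ is interior to $(\underline{\lambda}_i,\overline{\lambda}_i)$ and $f_i(\lambda^*) = p_i^*$, so $u_i = \lambda^*$.
\item If $p_i^* = 1$, then $\lambda^* \le \underline{\lambda}_i = u_i(1) = u_i$, so $u_i \ge \lambda^*$. This suffices, since the condition $u_i<\lambda^*$ never arises.
\item If $p_i^* = 0$, then $\lambda^* \ge \overline{\lambda}_i = u_i$.
\end{itemize}
Together these give conditions 1--3. The converse direction of \Cref{prop:equilibrium} then says $\vec p^*$ is a best response to itself in the marginal LP. By \Cref{rem:marginals}, $\vec p^*$ is realizable as a distribution over $\ell$-subsets and is a best response in the original strategy space, hence a symmetric mixed Nash equilibrium.

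\textbf{Uniqueness.} This is the step needing care, because $G$ is only non-increasing and may be flat, so $\lambda^*$ itself need not be unique. Suppose $\lambda_1<\lambda_2$ both satisfy $G=\ell$. Each $\fclamp_i$ is non-increasing, so $\fclamp_i(\lambda_1) \ge \fclamp_i(\lambda_2)$ for every $i$. The two sums are equal, so every term is equal, and the two thresholds induce the same $\vec p^*$.

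The statement also claims the equilibrium profile is unique among all symmetric equilibria, not just those produced by thresholds. For this, take any symmetric equilibrium $\vec p$. By \Cref{rem:marginals} it must be LP-optimal against its own induced $u_i$. The forward direction of \Cref{prop:equilibrium} as stated only asserts that some optimal solution has threshold structure, so instead we use the converse-free complementary slackness argument: any optimal $\vec p$ paired with a dual optimum $\lambda$ satisfies 1--3. Given 1--3, each coordinate satisfies $p_i = \fclamp_i(\lambda)$.
\begin{itemize}
\item Interior $p_i$ gives $u_i(p_i)=\lambda$, hence $p_i=f_i(\lambda)$.
\item $p_i=1$ requires $u_i(1)=\underline{\lambda}_i \ge \lambda$, which forces $\fclamp_i(\lambda)=1$.
\item $p_i=0$ similarly forces $\fclamp_i(\lambda)=0$.
\end{itemize}
Then $G(\lambda)=\ell$, and the monotonicity argument above shows $\vec p = \vec p^*$. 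The main obstacle is making this last step rigorous, specifically justifying that complementary slackness with a dual optimum applies to every optimal primal solution, not just one. That holds by standard LP theory, since any primal optimum together with any dual optimum satisfies complementary slackness.
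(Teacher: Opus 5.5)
Your proposal is correct and follows the paper's proof essentially step for step: continuity of each $\fclamp_i$ plus the intermediate value theorem for existence, the converse direction of \Cref{prop:equilibrium} together with \Cref{rem:marginals} for the equilibrium property, and termwise monotonicity of the $\fclamp_i$ for uniqueness. Your added observation that complementary slackness holds for \emph{every} primal optimum paired with any dual optimum is a useful tightening. The paper simply asserts that ``by \Cref{prop:equilibrium}, any symmetric equilibrium profile has the form $(\fclamp_i(\lambda))_i$,'' even though the property as stated only guarantees that \emph{some} optimal solution has this threshold structure.
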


\begin{proof}
\emph{Existence.} Each $\fclamp_i$ is continuous on $\R$, by continuity of $f_i$ on $[\underline{\lambda}_i, \overline{\lambda}_i]$ \textbf{(A2)} and the matching boundary values \textbf{(A1)}; hence $G$ is continuous and non-increasing, with $G(\min_i \underline{\lambda}_i) = \numtransactions > \ell$ and $G(\max_i \overline{\lambda}_i) = 0 < \ell$. By the intermediate value theorem there exists $\lambda^*$ with $G(\lambda^*) = \ell$. When every proposer plays $\mathbf{p}^* = (\fclamp_i(\lambda^*))_i$, the induced marginal payoffs satisfy the three conditions of \Cref{prop:equilibrium} at threshold $\lambda^*$: if $\fclamp_i(\lambda^*) = 1$ then $\lambda^* \le \underline{\lambda}_i = u_i(1)$; if $\fclamp_i(\lambda^*) = 0$ then $\lambda^* \ge \overline{\lambda}_i = u_i(0)$; and at interior points $u_i(p_i^*) = \lambda^*$ by definition of $f_i$. By the converse direction of \Cref{prop:equilibrium}, $\mathbf{p}^*$ is a best response to itself, and by \Cref{rem:marginals} it is induced by a mixed strategy over $\ell$-subsets; hence a symmetric mixed Nash equilibrium exists.

\emph{Uniqueness.} By \Cref{prop:equilibrium}, any symmetric equilibrium profile has the form $(\fclamp_i(\lambda))_i$ for some threshold $\lambda$ with $\sum_i \fclamp_i(\lambda) = \ell$. Let $\lambda^* < \lambda'$ both satisfy this. Since each $\fclamp_i$ is non-increasing, $\fclamp_i(\lambda') \le \fclamp_i(\lambda^*)$ for every $i$; since both sums equal $\ell$, every inequality holds with equality, so the two thresholds induce the same probability vector. (The threshold itself need not be unique: $G$ may be constant on an interval on which every $f_i$ is clamped; the induced profile is unaffected.)

\end{proof}

\begin{algorithm}[htb!]
\caption{Symmetric-equilibrium binary search}
\label{alg:binsearch}
\begin{algorithmic}[1]
\Require Admissible functions $f_1, \dots, f_\numtransactions$ with parameters $(\underline{\lambda}_i, \overline{\lambda}_i)$; block size $\ell$; tolerance $\varepsilon > 0$.
\Ensure Probabilities $(p_1^*, \dots, p_\numtransactions^*)$ forming an $\varepsilon$-approximate symmetric equilibrium.

\Statex
\Function{$G$}{$\lambda$}\Comment{clamped total inclusion mass}
    \State $S \gets 0$
    \For{$i = 1, \dots, \numtransactions$}
        \If{$\lambda \le \underline{\lambda}_i$}
            \State $S \gets S + 1$ \Comment{$\tx_i$ always included}
        \ElsIf{$\lambda \ge \overline{\lambda}_i$}
            \State $S \gets S + 0$ \Comment{$\tx_i$ never included}
        \Else
            \State $S \gets S + f_i(\lambda)$ 
        \EndIf
    \EndFor
    \State \Return $S$
\EndFunction

\Statex
\Function{BinarySearch}{$f_1,\dots,f_\numtransactions,\ \ell,\ \varepsilon$}
    \If{$\numtransactions \le \ell$} \Return $(1, 1, \dots, 1)$ 
    \EndIf
    \If{$\ell = 0$} \Return $(0, 0, \dots, 0)$
    \EndIf
    \State $\lambda_{\min} \gets \min_i \underline{\lambda}_i$;\quad $\lambda_{\max} \gets \max_i \overline{\lambda}_i$
        \Comment{$G(\lambda_{\min}) = \numtransactions \ge \ell$, $G(\lambda_{\max}) = 0 \le \ell$}
    \Loop
        \State $\lambda_{\mathrm{mid}} \gets (\lambda_{\min} + \lambda_{\max})/2$
        \If{$|G(\lambda_{\mathrm{mid}}) - \ell| \le \varepsilon$}
            \State \textbf{break} \Comment{inclusion mass is within tolerance of $\ell$}
        \ElsIf{$G(\lambda_{\mathrm{mid}}) > \ell$}
            \State $\lambda_{\min} \gets \lambda_{\mathrm{mid}}$ \Comment{too many inclusions; raise $\lambda$}
        \Else
            \State $\lambda_{\max} \gets \lambda_{\mathrm{mid}}$ \Comment{too few inclusions; lower $\lambda$}
        \EndIf
    \EndLoop
    \State $\lambda^* \gets \lambda_{\mathrm{mid}}$
    \For{$i = 1, \dots, \numtransactions$}
        \State $p_i^* \gets \min\!\bigl(\max(f_i(\lambda^*),\, 0),\, 1\bigr)$
    \EndFor
    \State \Return $(p_1^*, \dots, p_\numtransactions^*)$
\EndFunction
\end{algorithmic}
\end{algorithm}

\begin{theorem}[Termination and runtime]\label{thm:termination}
Let $\omega_G$ be a modulus of continuity of $G$ on $[\min_i \underline{\lambda}_i,\, \max_i \overline{\lambda}_i]$ (one exists, since $G$ is continuous on this compact interval), and fix any $\delta_\varepsilon > 0$ with $\omega_G(\delta_\varepsilon) \le \varepsilon$. Then \Cref{alg:binsearch} terminates after at most $\bigl\lceil \log_2(\Delta/\delta_\varepsilon) \bigr\rceil$ iterations, where $\Delta = \max_i \overline{\lambda}_i - \min_i \underline{\lambda}_i$. Each iteration evaluates $G$ once at cost of $O(\numtransactions)$ evaluations of the $f_i$, giving total runtime $O\!\bigl(\numtransactions \log(\Delta/\delta_\varepsilon)\bigr)$.
\end{theorem}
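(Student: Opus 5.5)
The plan is a standard bisection argument built on a loop invariant. Write $[\lambda_{\min}^{(k)}, \lambda_{\max}^{(k)}]$ for the bracket at the start of iteration $k$. I will show by induction that $G(\lambda_{\min}^{(k)}) > \ell$ (or $\ge \ell$ at $k=0$) and $G(\lambda_{\max}^{(k)}) < \ell$ (or $\le \ell$ at $k=0$), and that the bracket has length $\Delta/2^k$.

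The base case uses the facts already recorded before \Cref{thm:uniqueness}: $G(\min_i \underline{\lambda}_i) = \numtransactions > \ell$ and $G(\max_i \overline{\lambda}_i) = 0 < \ell$. These hold because we are past the early returns, so $0 < \ell < \numtransactions$. For the inductive step, the loop moves $\lambda_{\min}$ to the midpoint only when $G(\lambda_{\mathrm{mid}}) > \ell$. It moves $\lambda_{\max}$ only when $G(\lambda_{\mathrm{mid}}) < \ell - \varepsilon$, since otherwise the break condition would have fired. Hence the strict-inequality invariant is preserved, and the length halves each iteration.

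The termination bound then follows by contradiction. Suppose the loop has not broken within $K = \lceil \log_2(\Delta/\delta_\varepsilon) \rceil$ iterations. After $K$ halvings the bracket has length at most $\delta_\varepsilon$, so at iteration $K+1$ the midpoint lies within $\delta_\varepsilon$ of both endpoints. The previous iterations did not break, so both endpoints still satisfy the invariant $G(\lambda_{\min}) > \ell$ and $G(\lambda_{\max}) < \ell$. By the intermediate value theorem there is a point $\hat\lambda$ in the bracket with $G(\hat\lambda) = \ell$, and $|\lambda_{\mathrm{mid}} - \hat\lambda| \le \delta_\varepsilon/2 \le \delta_\varepsilon$. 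The continuity of $G$ needed here is already established in the existence part of \Cref{thm:uniqueness}. The defining property of the modulus of continuity then gives $|G(\lambda_{\mathrm{mid}}) - \ell| \le \omega_G(\delta_\varepsilon) \le \varepsilon$, so the break fires. This holds without any monotonicity subtleties and with no need for $G$ to be strictly decreasing.

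The main obstacle is the off-by-one bookkeeping. I need to check carefully whether the break is guaranteed at iteration $K$ or $K+1$, and make the count match the stated $\lceil \log_2(\Delta/\delta_\varepsilon) \rceil$. If the count is off by one, I would tighten it using the factor $1/2$ in the midpoint-to-root distance: after $K-1$ halvings the bracket has length at most $2\delta_\varepsilon$, so the midpoint of iteration $K$ is within $\delta_\varepsilon$ of the root. This absorbs the extra step.

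The runtime claim is immediate from the loop count. Each iteration makes one call to $G$, which loops over $\numtransactions$ transactions and makes at most one evaluation of $f_i$ for each. The final clamping pass costs another $O(\numtransactions)$. The total is therefore $O(\numtransactions \log(\Delta/\delta_\varepsilon))$ evaluations.
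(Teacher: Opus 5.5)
Your proof is correct. It differs from the paper's in how it guarantees a root inside the current bracket.

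\textbf{The paper's route.} It fixes a global threshold $\lambda^*$ with $G(\lambda^*) = \ell$, taken from \Cref{thm:uniqueness}. It then borrows the invariant $\lambda_{\min} \le \lambda^* \le \lambda_{\max}$ from the proof of \Cref{thm:correctness}, and that invariant relies on $G$ being non-increasing.

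\textbf{Your route.} You carry the sign condition $G(\lambda_{\min}) > \ell > G(\lambda_{\max})$. The update rule preserves it directly, and you correctly note that $\lambda_{\max}$ moves only when $G(\lambda_{\mathrm{mid}}) < \ell - \varepsilon$. You then apply the intermediate value theorem afresh to the current bracket.

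\textbf{Comparison.} Your version needs only continuity of $G$ and the initial sign conditions. It avoids the paper's forward reference to the correctness proof, and it would still work if $G$ were not monotone. The paper's version is shorter given the surrounding results. It also keeps the bracket pinned to a specific equilibrium threshold, which it needs anyway for \Cref{thm:correctness}.

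\textbf{Counting.} Your half-width refinement is the argument you should actually present. At iteration $K = \lceil \log_2(\Delta/\delta_\varepsilon) \rceil$ the bracket has width $\Delta/2^{K-1}$, so the midpoint lies within $\Delta/2^{K} \le \delta_\varepsilon$ of a root. This gives exactly the stated bound.

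Your first-pass contradiction argument, as written, only shows that the break fires at iteration $K+1$. It therefore does not contradict its own hypothesis, so drop it in favour of the direct version. The paper's one-line ``once $\Delta/2^k \le \delta_\varepsilon$'' leaves the same factor of $2$ implicit.

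Both versions also tacitly assume $\delta_\varepsilon < \Delta$. Otherwise $K \le 0$, while the loop always runs at least once, so the bound should strictly read $\max(1, K)$.
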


\begin{proof}
By \Cref{thm:uniqueness} there exists $\lambda^*$ with $G(\lambda^*) = \ell$, and the update rule maintains the invariant $\lambda_{\min} \le \lambda^* \le \lambda_{\max}$ (verified in the proof of \Cref{thm:correctness}). The bracket width halves each iteration: after $k$ iterations it equals $\Delta/2^k$. Once $\Delta/2^k \le \delta_\varepsilon$, the midpoint satisfies $|\lambda_{\mathrm{mid}} - \lambda^*| \le \delta_\varepsilon$, hence
$|G(\lambda_{\mathrm{mid}}) - \ell| = |G(\lambda_{\mathrm{mid}}) - G(\lambda^*)| \le \omega_G(\delta_\varepsilon) \le \varepsilon$,
and the termination condition is met. Each iteration computes $G$ once, inspecting all $\numtransactions$ transactions; for TFMs whose $f_i$ admits no closed form (e.g., the even-split TFM of \Cref{ex:even-split}), each evaluation additionally costs an inner bisection, multiplying the runtime by $O(\log(1/\varepsilon'))$ for inner accuracy $\varepsilon'$.
\end{proof}

With existence and uniqueness established, we can now prove that \Cref{alg:binsearch} outputs an approximate equilibrium.

\begin{definition}[$\varepsilon$-approximate equilibrium]\label{def:approx-eq}
A probability vector $(p_1, \dots, p_\numtransactions) \in [0,1]^\numtransactions$ is an \emph{$\varepsilon$-approximate symmetric mixed Nash equilibrium} if there exists $\hat{\lambda} \in \R$ such that
\begin{enumerate}
    \item the equilibrium conditions of \Cref{prop:equilibrium} hold at $\hat{\lambda}$ with probabilities $p_i$, and
    \item $\bigl|\sum_{i=1}^{\numtransactions} p_i - \ell\bigr| \le \varepsilon$.
\end{enumerate}
\end{definition}

\begin{theorem}[Correctness]\label{thm:correctness}
The output $(p_1^*, \dots, p_\numtransactions^*)$ of \Cref{alg:binsearch} is an $\varepsilon$-approximate symmetric mixed Nash equilibrium, where $\varepsilon$ is the tolerance supplied to the algorithm.
\end{theorem}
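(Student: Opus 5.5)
The plan is to take $\hat{\lambda} := \lambda^*$, the final midpoint at which the loop exits, as the witness required by \Cref{def:approx-eq}, and then check the two clauses in turn. The second clause should be immediate. The output is $p_i^* = \fclamp_i(\lambda^*)$, so $\sum_i p_i^* = G(\lambda^*)$. The loop exits only when $|G(\lambda_{\mathrm{mid}}) - \ell| \le \varepsilon$, and it does exit by \Cref{thm:termination}. The degenerate branches are trivial: for $m \le \ell$ and for $\ell = 0$ the all-ones and all-zeros vectors satisfy the conditions with any $\hat\lambda$ below $\min_i \underline{\lambda}_i$ or above $\max_i \overline{\lambda}_i$, respectively, and the mass is exact.

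For the first clause I would repeat the verification from the existence part of \Cref{thm:uniqueness}, because that argument never used $G(\lambda^*) = \ell$. Suppose every other proposer plays $p_i^* = \fclamp_i(\lambda^*)$. There are three cases.
\begin{itemize}
\item If $p_i^* = 1$, then $\lambda^* \le \underline{\lambda}_i = u_i(1) = u_i$. Condition 2 therefore cannot be violated, and conditions 1 and 3 are vacuous or satisfied.
\item If $p_i^* = 0$, then $\lambda^* \ge \overline{\lambda}_i = u_i(0) = u_i$, which is consistent with conditions 1 and 3.
\item If $0 < p_i^* < 1$, then $\lambda^* \in (\underline{\lambda}_i, \overline{\lambda}_i)$ and $p_i^* = f_i(\lambda^*)$. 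By the definition of $f_i$ as the inverse of the symmetric marginal payoff, $u_i(p_i^*) = \lambda^*$, which is condition 3. The strict inequalities in conditions 1 and 2 then cannot hold, so those conditions are satisfied vacuously.
\end{itemize}
There is one boundary subtlety to handle: for example, $p_i^* = 1$ together with $u_i > \lambda^*$ is allowed, while $p_i^* = 1$ together with $u_i < \lambda^*$ must be ruled out. The bound $\lambda^* \le u_i(1)$ does exactly this.

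I would also record the bracketing invariant $\lambda_{\min} \le \lambda^\dagger \le \lambda_{\max}$, where $\lambda^\dagger$ is any exact root, because the proof of \Cref{thm:termination} defers it to this proof. Initially $G(\lambda_{\min}) = m > \ell$ and $G(\lambda_{\max}) = 0 < \ell$. Each update moves $\lambda_{\min}$ to the midpoint only when $G(\lambda_{\mathrm{mid}}) > \ell$, and moves $\lambda_{\max}$ to the midpoint only when $G(\lambda_{\mathrm{mid}}) < \ell$. So $G(\lambda_{\min}) \ge \ell \ge G(\lambda_{\max})$ is preserved, and continuity plus the intermediate value theorem give a root in every bracket.

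The main obstacle I expect is interpretive rather than technical. The conditions of \Cref{prop:equilibrium} are stated for a feasible $\vec p$ with $\sum p_i = \ell$, whereas here the sum is only within $\varepsilon$ of $\ell$. I would make explicit that \Cref{def:approx-eq} asks only for the pointwise threshold conditions 1--3 together with approximate mass, not feasibility. These pointwise conditions hold exactly at $\lambda^*$ regardless of the mass, so no perturbation argument is needed. If one also wanted $\varepsilon$-optimality of the best response in utility terms, I would bound the loss by $\max_i \overline{\lambda}_i \cdot \varepsilon$ via the LP, but the definition as stated does not require this.
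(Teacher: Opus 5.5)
Your proposal is correct and follows essentially the same route as the paper. Like the paper, you take $\hat\lambda$ to be the final midpoint, check the conditions of \Cref{prop:equilibrium} by the same three-case split using $u_i(1)=\underline{\lambda}_i$ and $u_i(0)=\overline{\lambda}_i$, read the mass clause off the termination test, and maintain the bracketing invariant; your version of the invariant, via $G(\lambda_{\min})\ge\ell\ge G(\lambda_{\max})$ and the intermediate value theorem, is equivalent to the paper's fixed-root version.

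One small correction to your side remark on degenerate branches: when $m<\ell$ the all-ones output has mass $m<\ell$, so the mass is not exact and clause 2 of \Cref{def:approx-eq} can fail there. That case is excluded by the paper's standing assumption $m>\ell$.
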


\begin{proof}
By \Cref{thm:uniqueness}, there exists $\lambda^*$ with $G(\lambda^*) = \ell$; fix any such $\lambda^*$. The corresponding probability vector $\vec p^* = (\fclamp_1(\lambda^*), \dots, \fclamp_\numtransactions(\lambda^*))$ is the unique symmetric mixed Nash equilibrium profile. We prove correctness by maintaining the following invariant throughout the execution of the algorithm.

\paragraph{Invariant.} At every iteration of the loop, $\lambda_{\min} \le \lambda^* \le \lambda_{\max}$.

\paragraph{Base case.} Initially, $\lambda_{\min} = \min_i \underline{\lambda}_i$ and $\lambda_{\max} = \max_i \overline{\lambda}_i$. By admissibility, $G(\lambda_{\min}) = \numtransactions \ge \ell$ and $G(\lambda_{\max}) = 0 \le \ell$. Since $G$ is non-increasing and $G(\lambda^*) = \ell$, we have $\lambda_{\min} \le \lambda^* \le \lambda_{\max}$.

\paragraph{Inductive step.} Suppose $\lambda_{\min} \le \lambda^* \le \lambda_{\max}$ at the start of an iteration, and let $\lambda_{\mathrm{mid}} = (\lambda_{\min} + \lambda_{\max})/2$. If the termination condition $|G(\lambda_{\mathrm{mid}}) - \ell| \le \varepsilon$ is met, the loop exits. Otherwise, there are two cases:
\begin{itemize}
    \item If $G(\lambda_{\mathrm{mid}}) > \ell$: since $G$ is non-increasing and $G(\lambda^*) = \ell < G(\lambda_{\mathrm{mid}})$, we have $\lambda^* \ge \lambda_{\mathrm{mid}}$. The algorithm sets $\lambda_{\min} \gets \lambda_{\mathrm{mid}}$, so $\lambda_{\min} \le \lambda^* \le \lambda_{\max}$ is preserved.
    \item If $G(\lambda_{\mathrm{mid}}) < \ell$: since $G$ is non-increasing and $G(\lambda^*) = \ell > G(\lambda_{\mathrm{mid}})$, we have $\lambda^* \le \lambda_{\mathrm{mid}}$. The algorithm sets $\lambda_{\max} \gets \lambda_{\mathrm{mid}}$, so $\lambda_{\min} \le \lambda^* \le \lambda_{\max}$ is preserved.
\end{itemize}

\paragraph{Equilibrium conditions.} At termination, let $\hat\lambda = \lambda_{\mathrm{mid}}$ be the algorithm's output threshold, and set $p_i = \fclamp_i(\hat\lambda)$ for each $i$. We verify the three cases of \Cref{prop:equilibrium} using admissibility of each $f_i$.
\begin{itemize}
    \item \emph{Pure inclusion ($p_i = 1$).} This occurs iff $f_i(\hat\lambda) \ge 1$, equivalently $\hat\lambda \le \underline{\lambda}_i$ by~\textbf{(A1)} and~\textbf{(A3)}. In this regime the marginal payoff $u_i$ exceeds $\hat\lambda$ even under maximal competition, so always-include is a best response.
    \item \emph{Pure exclusion ($p_i = 0$).} This occurs iff $f_i(\hat\lambda) \le 0$, equivalently $\hat\lambda \ge \overline{\lambda}_i$. The marginal payoff is below $\hat\lambda$ even at monopoly, so always-exclude is a best response.
    \item \emph{Interior points ($p_i \in (0,1)$).} Here $\hat\lambda \in (\underline{\lambda}_i, \overline{\lambda}_i)$ and $p_i = f_i(\hat\lambda)$, so by definition of $f_i$, $u_i = \hat\lambda$. The proposer is exactly indifferent between any selection.
\end{itemize}

\paragraph{Block-size constraint.} The termination condition directly guarantees
\[
\Bigl|\sum_{i=1}^{\numtransactions} p_i - \ell\Bigr| = |G(\hat\lambda) - \ell| \le \varepsilon.
\]
Hence the output satisfies \Cref{def:approx-eq}.

\end{proof}

\section{Deriving Economic Censorship Resistance}
\label{sec:deriving-censorship-resistance}
In this section, we describe a general ``cut-and-paste'' method to derive the bribe required to censor a target transaction with a given probability. Our method only requires computing the equilibrium probabilities for a \emph{single} game. Using these probabilities, we can derive a closed-form expression for the bribe required, and by extension, for the economic censorship resistance (per~\cref{def:ecnomic-cr-per-tx}) of the target transaction. We then apply the cut-and-paste method to all three TFMs we study in this paper.

\subsection{Cut-and-Paste Method}

\paragraph{Setup}
 We recall some notation from earlier.
 Let $u_i(p_i)$ be the expected (increase in) utility of including $\tx_i$ assuming everyone else includes it with probability $p_i$. 
 Given transaction tips $\tvec = (t_i)_{i \in m}$ and target probability mass $\pmass$,  let $(\pvec = (p_i)_{i \in [m]}, \lambda) = \equil(\tvec, \pmass)$ be the equilibrium probabilities and common utility level, respectively.
This means that $u_i(p_i) = \lambda$ for all $i \in [m]$.
 Also let $f_i(x)$ be the admissible function (per~\cref{def:admissible}) such that $p_i = f_i(\lambda)$.

Suppose the adversary $\adv$ wishes to censor a target transaction $\tx_c$ (with tip $t_c$) for some $c \in [m]$.
We are interested in determining the bribe amount $b^*$ that $\adv$ needs to offer to all parties, so that $\tx_c$ is censored with probability $\censorprob \in (0,1]$. In order for $\adv$ to achieve the target censorship probability $\censorprob$, each party must include $\tx_c$ with probability $p_c^*$ such that $\censorprob = (1 - p_c^*)^n$, or equivalently, $p_c^* = 1 - \censorprob^{1/n}$. This holds regardless of the underlying TFM.

\paragraph{Cut-and-Paste Method}
The method proceeds in two steps. 

First, the \emph{cut} step: solve for the equilibrium of the game with 
$\tx_c$ removed, reserving exactly $p_c^*$ units of capacity to be filled 
in later. Let 
$(\pvec_{-c}, \lambda_{-c}) := \equil(\tvec_{-c},\, \pmass - p_c^*)$
denote this equilibrium.

Second, the \emph{paste} step: add $\tx_c$ back, with the adversary's bribe 
$b^*$ chosen so that, in the resulting equilibrium of the full game 
$\game(\tvec, \pmass)$, (1)~$\tx_c$ is included with the target probability 
$p_c^*$; and (2)~every $\tx_j \neq \tx_c$ retains the probability $p_j$ it 
had in $\pvec_{-c}$. We carry out this analysis in \cref{sec:cnp-correctness} and 
obtain the closed-form bribe $b^* = u_c(p_c^*) - \lambda_{-c}$.

\subsection{Correctness of the Cut-and-Paste Method}
\label{sec:cnp-correctness}

We prove that the cut-and-paste method produces an equilibrium of $\game(\tvec, \pmass)$ under bribe $b^*$ in which $\tx_c$ attains probability $p_c^*$ and every $\tx_j \neq \tx_c$ retains its probability from the post-cut equilibrium $\pvec_{-c}$. 

The argument relies on a structural property of equilibria captured by the following lemma: a new transaction can be added to the game without disturbing the existing inclusion probabilities or utility level, provided capacity is expanded by exactly the amount the new transaction consumes at level $\lambda$. The paste step will exploit this to add $\tx_c$ 
back into the post-cut equilibrium without re-solving the full system.

\begin{lemma}[Marginal extension of equilibrium]~\label{lemma:equilibrium-marginal-extension}
    Let $\pvec = \{ p_1, \dots, p_m \}$ be a symmetric Nash equilibrium of the game $\game(\tvec, \pmass)$ at utility level $\lambda$, where $\tvec = \{\tx_i\}_{i \in [m]}$. Let $\tx_{m+1}$ be a new transaction with admissible function $f_{m+1}$ of parameters $\lambdalow_{m+1}, \lambdahigh_{m+1}$, and let $\fclamp_{m+1}$ be its associated clamped function. Define $ \tvec' := \tvec \cup \{\tx_{m+1}\} $ and $ \pmass' := \pmass + \fclamp_{m+1}(\lambda)$.
    Then $\pvec' := \pvec \cup \{ \fclamp_{m+1}(\lambda) \} $ is the unique symmetric mixed Nash equilibrium of $\game(\tvec', \pmass')$, supported at the same utility level $\lambda$. In particular, $p_i$ is unchanged 
    for every $i \in [m]$, and $\tx_{m+1}$ attains $p_{m+1} = \fclamp_{m+1}(\lambda)$.
\end{lemma}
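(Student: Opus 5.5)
The plan is to reduce everything to the threshold characterization already established, namely \Cref{prop:equilibrium} together with \Cref{thm:uniqueness}. First I will record what it means for $\pvec$ to be an equilibrium of $\game(\tvec,\pmass)$ at level $\lambda$. By \Cref{prop:equilibrium} and the construction of the clamped functions, $p_i = \fclamp_i(\lambda)$ for every $i\in[m]$, and $\sum_{i\le m}\fclamp_i(\lambda) = \pmass$. Writing $G'$ for the clamped aggregate of the extended game, we then get
\[
G'(\lambda) \;=\; \sum_{i=1}^{m}\fclamp_i(\lambda) + \fclamp_{m+1}(\lambda) \;=\; \pmass + \fclamp_{m+1}(\lambda) \;=\; \pmass'.
\]
So $\lambda$ is a threshold whose clamped inclusion mass in the extended game equals the new capacity $\pmass'$.

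Next I will invoke the existence part of \Cref{thm:uniqueness} for the extended transaction set. It says that the vector $(\fclamp_1(\lambda),\dots,\fclamp_{m+1}(\lambda)) = \pvec'$ is a symmetric mixed Nash equilibrium of $\game(\tvec',\pmass')$. Concretely, I will re-verify the three conditions of \Cref{prop:equilibrium} at threshold $\lambda$, using the fact that each transaction's marginal payoff depends only on its own probability. For the old transactions the $u_i$ are unchanged because the $p_i$ are unchanged. For $\tx_{m+1}$ I will argue by cases:
\begin{itemize}
\item if $\lambda \le \lambdalow_{m+1}$, then $p_{m+1}=1$ and $u_{m+1}(1)=\lambdalow_{m+1}\ge\lambda$;
\item if $\lambda \ge \lambdahigh_{m+1}$, then $p_{m+1}=0$ and $u_{m+1}(0)=\lambdahigh_{m+1}\le\lambda$;
\item otherwise $p_{m+1}$ is interior and $u_{m+1}(p_{m+1})=\lambda$ by the definition of $f_{m+1}$.
\end{itemize}
Feasibility of the budget holds by the display above. \Cref{rem:marginals} then lifts the marginals to a mixed strategy over $\pmass'$-subsets.

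For uniqueness I will apply the uniqueness clause of \Cref{thm:uniqueness}. Any symmetric equilibrium of the extended game has the form $(\fclamp_i(\lambda'))_i$ with $G'(\lambda')=\pmass'$, and monotonicity forces the same vector as at $\lambda$.

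The main obstacle is a technical point: \Cref{thm:uniqueness} is stated for integer-free $\ell\in(0,m)$, but \Cref{rem:marginals} needs an integer budget to realize marginals as random subsets, and here $\pmass' = \pmass + \fclamp_{m+1}(\lambda)$ is in general fractional. I will handle this by noting two things. First, the LP characterization, existence and uniqueness arguments never use integrality. Second, the lemma is applied in the cut-and-paste method precisely so that $\pmass - p_c^* + p_c^* = \pmass$ is an integer in the final game, and there realizability applies. I will also need $\pmass' < m+1$ (or the degenerate case $\pmass'=m+1$, where all probabilities are $1$), which follows since $\pmass<m$ and $\fclamp_{m+1}\le 1$.
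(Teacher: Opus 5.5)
Your proposal is correct and follows essentially the same route as the paper. Both verify the conditions of \Cref{prop:equilibrium} at the same level $\lambda$ (the old transactions are unaffected because each $u_i$ depends only on $p_i$), check the clearing identity $\sum_{i\le m+1}\fclamp_i(\lambda)=\pmass+\fclamp_{m+1}(\lambda)=\pmass'$, and obtain uniqueness from \Cref{thm:uniqueness}. Your explicit case split for the clamped regimes of $\tx_{m+1}$, and your observation that $\pmass'$ is generally fractional, are refinements the paper's proof glosses over rather than a different approach. The fractional-budget point means \Cref{rem:marginals} only applies in the final integer-budget game, so the lemma is really a statement about the relaxed threshold equilibrium.
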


\begin{proof}[Proof of~\cref{lemma:equilibrium-marginal-extension}]
    We verify that the profile $(p_1, \ldots, p_m,\, \fclamp_{m+1}(\lambda))$ satisfies the equilibrium conditions in $\game(\tvec', \pmass')$ at level $\lambda$.
    For each $i \in [m]$, the identity $p_i = f_i(\lambda)$ holds by hypothesis on the original NE. Since $u_i$ depends only on $p_i$, this identity is a local condition at $\tx_i$ and is unaffected by enlarging the transaction set. For $i = m+1$, the identity $p_{m+1} = f_{m+1}(\lambda)$ holds by construction. The clearing condition is immediate from the original equilibrium:
    \begin{align*}
        \sum_{i = 1}^{m+1} \fclamp_{i}(\lambda) =
        \sum_{i = 1}^{m} \fclamp_{i}(\lambda) +  \fclamp_{m+1}(\lambda)  = 
        \pmass + \fclamp_{m+1}(\lambda) = 
        \pmass'.
    \end{align*}
    Hence $\pvec'$ is a symmetric mixed NE of $\game(\tvec', \pmass')$ at level $\lambda$. By \autoref{thm:uniqueness}, this is the unique equilibrium of $\game(\tvec', \pmass')$.
    
\end{proof}

Our argument for the correctness of the cut-and-paste method has three pieces: (1)~characterize how a bribe $b$ shifts the inclusion probability of $\tx_c$; (2)~apply~\cref{lemma:equilibrium-marginal-extension} to add $\tx_c$ back into the 
post-cut equilibrium; (3)~solve for the bribe $b^*$ that delivers the 
target probability.

\paragraph{Inclusion under bribe} 
If $\adv$ offers bribe $b$ to all parties to 
censor $\tx_c$, the expected utility from including $\tx_c$ becomes
\begin{align*}
  u_c'(p_c) \;:=\; u_c(p_c) - b.
\end{align*}
In direct analogy with $f_c$, define $f_c'(\,\cdot\,, b)$ as the inverse of 
$u_c'$ in its first argument: $f_c'(\lambda, b)$ is the unique value 
satisfying $u_c'\bigl(f_c'(\lambda, b)\bigr) = \lambda$ whenever 
$f_c'(\lambda, b) \in [0, 1]$. A quick rearrangement yields the 
\emph{bribe-shift identity}:
\begin{align*}
  f_c'(\lambda, b) \;=\; f_c(\lambda + b),
\end{align*}
since $u_c\bigl(f_c'(\lambda, b)\bigr) - b = \lambda$ rearranges to 
$u_c\bigl(f_c'(\lambda, b)\bigr) = \lambda + b$, and applying $f_c$ to both sides gives the identity. The shifted demand $f_c'(\,\cdot\,, b)$ is 
admissible (\cref{def:admissible}) with parameters 
$(\lambdalow_c - b,\, \lambdahigh_c - b)$, since horizontal translation 
preserves admissibility.

\paragraph{Adding $\tx_c$ back} 
We apply 
\cref{lemma:equilibrium-marginal-extension} to the post-cut equilibrium 
$(\pvec_{-c}, \lambda_{-c})$, treating $\tx_c$ -- equipped with demand 
$f_c'(\,\cdot\,, b)$ -- as the new transaction. The lemma gives the unique 
equilibrium of the augmented game
\begin{align*}
  \game \bigl( \tvec,\; (\pmass - p_c^*) + \fclamp_c'(\lambda_{-c}, b) \bigr):
\end{align*}
the utility level remains $\lambda_{-c}$, every $\tx_j \neq \tx_c$ retains 
its probability from $\pvec_{-c}$, and $\tx_c$ attains probability 
$\fclamp_c'(\lambda_{-c}, b)$.

\paragraph{Solving for $b^*$} For this augmented equilibrium to coincide with 
the equilibrium of $\game(\tvec, \pmass)$ under bribe $b$ at target 
probability $p_c^*$, two conditions must hold: (1)~the augmented capacity 
$(\pmass - p_c^*) + \fclamp_c'(\lambda_{-c}, b)$ must equal $\pmass$; and 
(2)~the probability $\fclamp_c'(\lambda_{-c}, b)$ assigned to $\tx_c$ must 
equal $p_c^*$. Both reduce to the same equation,
\begin{align*}
  \fclamp_c'(\lambda_{-c}, b^*) \;=\; p_c^*.
\end{align*}
Applying the bribe-shift identity, the equation becomes 
$\fclamp_c(\lambda_{-c} + b^*) = p_c^*$, yielding for $p_c^* \in (0, 1)$ the 
closed-form bribe
\begin{align*}
  \boxed{\;b^* \;=\; u_c(p_c^*) \;-\; \lambda_{-c}.\;}
\end{align*}
The boundary case $p_c^* = 0$ (full censorship, $\censorprob = 1$) is 
handled by continuity, giving $b^* = \lambdahigh_c - \lambda_{-c}$.

Finally, recall from \cref{def:bribery-cost} that the bribery cost is the adversary's \emph{expected total expenditure}: the per-proposer bribe $b^*$ is collected by each proposer with probability $1 - p_c^* = \censorprob^{1/n}$, so the bribery cost equals $n\, \censorprob^{1/n}\, b^*$. All eCR values we report (\cref{sec:evaluation}) use this total.

\subsection{Applying Cut-and-Paste to Concrete TFMs}

We now apply the cut-and-paste recipe to all three TFMs.

\paragraph{Duplication-Penalizing TFM}
For this TFM, recall from \Cref{ex:duplication} that
\begin{align*}
    u_i(p_i) &= (1-p_i)^{n-1} \cdot t_i \quad
    \quad f_i(\lambda) = 1 - \bigg( \frac{\lambda}{t_i} \bigg)^{\frac{1}{n-1}}.
\end{align*}

Since $f_i$ is admissible, we can use $b =  u_i(p_i) - \lambda$, which gives
\begin{align*}
    b = (1-p_i)^{n-1} \cdot t_i - \lambda \quad \implies
    b = s^{\frac{n-1}{n}} \cdot t_i - \lambda.
\end{align*}

\paragraph{Even-Split TFM}
For this TFM, recall from \Cref{ex:even-split} that
\begin{align*}
    u_i(p_i) &= t_i \cdot \frac{1-(1-p_i)^{n}}{n\,p_i},
\end{align*}
whose inverse $f_i$ is admissible with parameters $(\underline{\lambda}_i, \overline{\lambda}_i) = (t_i/n,\, t_i)$, though it has no closed form. The bribe, however, does: substituting $p_c^* = 1 - s^{1/n}$ into $b = u_c(p_c^*) - \lambda_{-c}$ and using $(1-p_c^*)^n = s$ gives
\begin{align*}
    b = t_c \cdot \frac{1-(1-p_c^*)^{n}}{n\,p_c^*} - \lambda_{-c} \quad \implies
    b = \frac{t_c\,(1-s)}{n\,(1 - s^{\frac{1}{n}})} - \lambda_{-c}.
\end{align*}
As a sanity check, as $n \to \infty$ we have $n(1-s^{1/n}) \to \ln(1/s)$, so the per-proposer bribe approaches $t_c(1-s)/\ln(1/s) - \lambda_{-c}$: even with many proposers, each proposer's marginal stake in $\tx_c$ remains a constant fraction of $t_c$, unlike the collective TFM below where it vanishes as $1/n$.

\paragraph{Collective TFM.}
For this TFM, recall from \Cref{ex:socialist} that
\begin{align*}
    u_i(p_i) &= (1-p_i)^{n-1} \cdot \frac{t_i}{n} \quad \text{and} 
    \quad f_i(\lambda) = 1 - \bigg( \frac{\lambda}{\frac{t_i}{n}} \bigg)^{\frac{1}{n-1}}.
\end{align*}

Since $f_i$ is admissible, we can use $b =  u_i(p_i) - \lambda$, which gives
\begin{align*}
    b = (1-p_i)^{n-1} \cdot \frac{t_i}{n} - \lambda \quad \implies
    b = s^{\frac{n-1}{n}} \cdot \frac{t_i}{n} - \lambda.
\end{align*}

Note that this reduces to the duplication-penalizing TFM, with just a $1/n$ factor applied to all tips. This reflects the fact that the mechanisms are game-theoretically quite similar. For both mechanisms, the only factor relevant to the validator's utility is whether anyone else has included the transaction. If so, including it in their sub-block doesn't increase their utility; if not, their utility increases by $t_i$ or $t_i/n$.

\section{Comparing TFMs on Idealized and Real-World Data}
\label{sec:evaluation}
In this section, we run Monte-Carlo simulations using~\cref{alg:binsearch} and the cut-and-paste recipe to plot the approximate throughput and eCR as the number of proposers increases. We show how these values vary with congestion, burn, and the adversary's desired censorship probability. 
We compare across the three TFMs, finding that the duplication-penalizing TFM performs best across the board. 

\subsection{Comparing TFMs}
We first describe our baseline parameters and simulation method. We simulate a mempool of 250 transactions for a block that can hold any 200 of these. The bids of the transactions are drawn iid from $\textsc{Pareto}(1, 2.5)$, to simulate the standard environment of many low bidders and a few high bidders. We perform 1000 trials, each with different mempools, and for each trial, we use our ``cut-and-paste'' method to compute its economic censorship resistance and throughput, assuming that an adversary wants to achieve a $90\%$ censorship probability. We graph the average of the trials as the number of validators, $n$, varies. 

\autoref{fig:aft_baseline} compares the three TFMs using the baseline parameters, showing that the duplication-penalizing TFM achieves the highest eCR, and ties the collective TFM for the highest throughput. Recall from \Cref{sec:deriving-censorship-resistance} that the \emph{collective} TFM reduces to the duplication-penalizing TFM with all tips scaled by $1/n$; since a uniform rescaling of the tips merely rescales $\lambda^*$, the two TFMs induce identical equilibrium probabilities, and thus identical throughput. They differ starkly in eCR, however: under the collective TFM each proposer's marginal stake in any single transaction is only $t_i/n$, so the adversary's bribes shrink as $n$ grows, while users continue to pay their full tips. This is why the collective TFM's eCR remains near zero regardless of $n$.

The even-split TFM has lower throughput than the other two: including an already-included transaction still pays $t_i/k$, so duplication is penalized only weakly, and the equilibrium concentrates more probability mass on high-tip transactions. Its eCR sits between the other two TFMs. Compared to the duplication-penalizing TFM, the gap is driven by the price to users: the even-split TFM always charges users their tip, while the duplication-penalizing TFM only does so if their transaction is singly-included. As a result, the duplication-penalizing TFM has significantly higher eCR, as the bribes required are much larger relative to the expected user payment.

\begin{figure}
    \centering
    \includegraphics[width=0.75\linewidth]{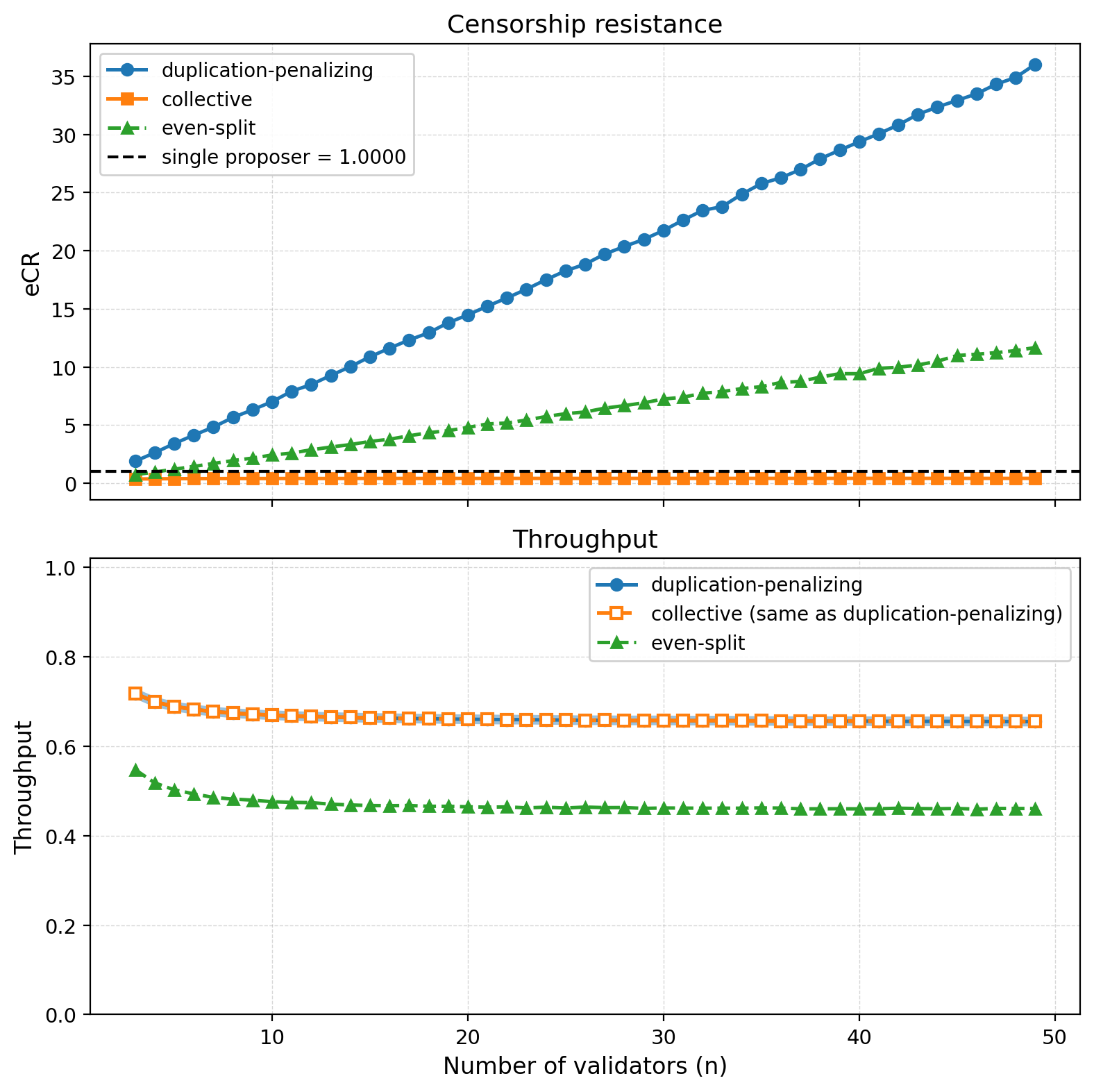}
    \caption{Comparison of three TFMs, with the baseline parameters of $m = 250$, $B=200$, $s=0.9$, and bids drawn iid from $\textsc{Pareto}(1, 2.5)$.}
    \label{fig:aft_baseline}
\end{figure}

\paragraph{Throughput.}
Note that the throughput quickly converges in both cases. For context, a single proposer trivially achieves throughput $1$ by deduplicating its own block; moving to MCP thus incurs a one-time duplication cost (roughly $0.65$ at our baseline under the duplication-penalizing TFM). The key observation is that this cost does not compound: it stabilizes almost immediately, so increasing $n$ further buys censorship resistance at essentially no additional throughput cost. To contextualize the throughput, we compare these algorithms to an optimistic baseline. Suppose that the validators had the explicit goal of maximizing throughput, but were not able to coordinate. It's easy to see that the optimal strategy would be for all validators to choose each transaction uniformly at random. Now, assuming that $m \ge B$, i.e., there are more transactions than block space (so that $\min(B, m) = B$ in \cref{def:throughput-fixed}), the throughput can be written as:
\begin{align*}
    \frac{1}{B}\sum_{i \le m} 1-(1-p_i)^n \approx \frac{1}{B}\sum_{i \le m} 1-e^{-np_i}
\end{align*}
The approximation holds well as long as $p_i$ is small, which happens when $n$ is moderate, as that decreases the sub-block size. Let $y_i = np_i$, the expected number of copies of transaction $i$. Under the uniform distribution, each transaction is chosen with $p_i=\ell/m$, where $\ell=B/n$ is the sub-block size. Thus, $y_i=np_i =B/m$, and the throughput of the uniform distribution is approximately $\frac{m}{B}(1-e^{-B/m})$. This represents an \textit{upper bound} on throughput---profit incentives will decrease the actual throughput, as validators are more likely to include more valuable transactions.

In \autoref{fig:aft_uniform_comparison}, we plot the throughput of the duplication-penalizing and even-split TFMs, as well as their ``gap from uniform,'' in terms of their $\sum_{i \le m} |y_i-B/m|$. This represents their aggregate excess duplication, where ``excess'' is defined relative to the uniform distribution. The main takeaway is that the duplication-penalizing TFM maintains a throughput very close to the optimistic upper-bound of a uniform distribution.
\begin{figure}
    \centering
    \includegraphics[width=0.7\linewidth]{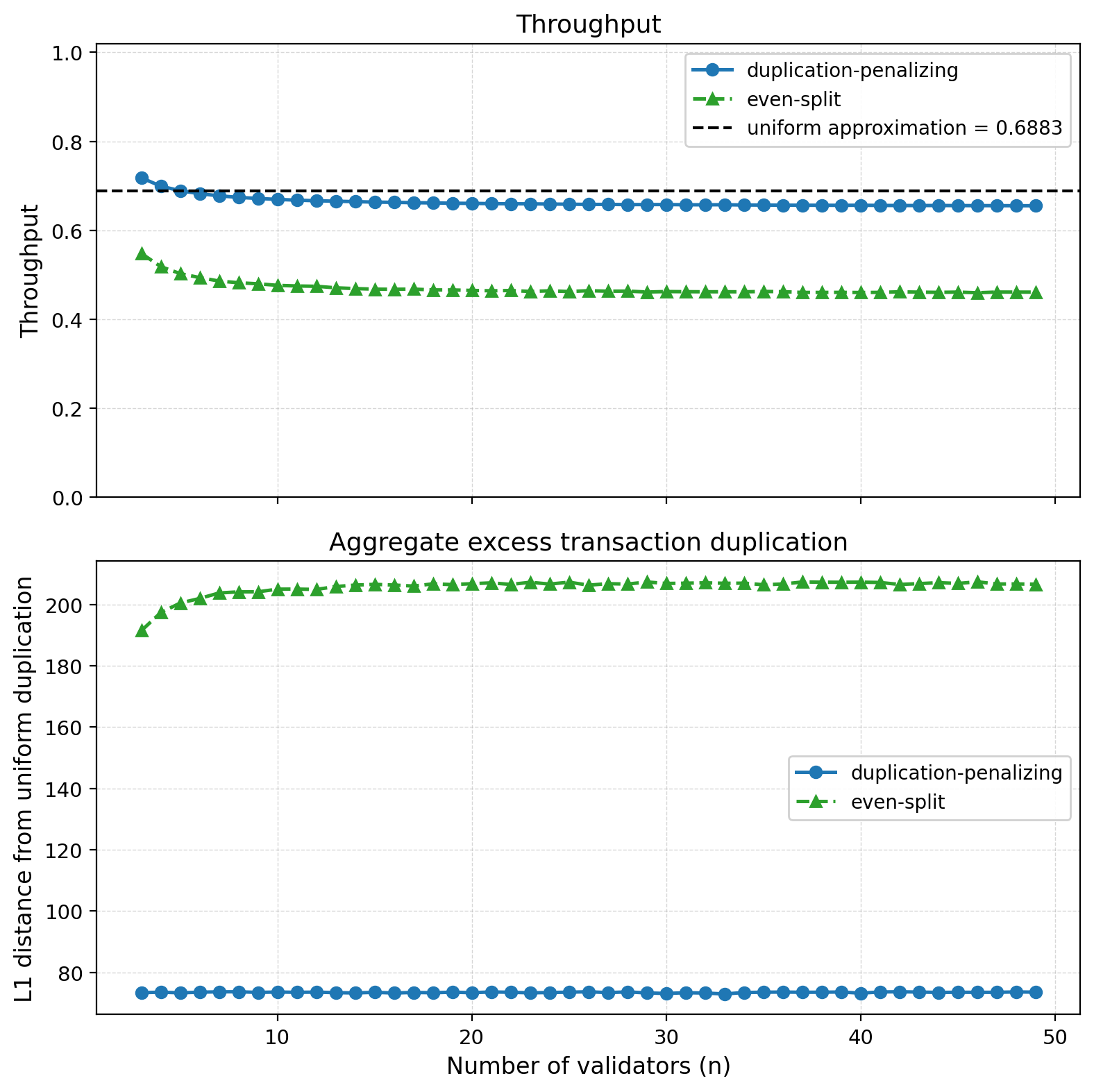}
    \caption{Comparing throughput to the optimistic upper bound of a uniform distribution. We use the baseline parameters of $m = 250$, $B=200$, $s=0.9$, and bids drawn iid from $\textsc{Pareto}(1, 2.5)$.}
    \label{fig:aft_uniform_comparison}
\end{figure}

\paragraph{Burn.}
So far, we have been assuming no burn; users pay only the payment from the TFM. Burn hurts eCR significantly, since higher burn means that adversaries can bribe proposers for much less than the users pay. 
To determine a reasonable amount of burn, we use data from Ethereum in January 2026, and find that the base fee was 8.7x the average tip. \autoref{fig:burn} plots the eCR of the three TFMs assuming a burn multiple of 4 and 8.7. All users are charged the same amount of burn.

\begin{figure}
\centering
    \begin{subfigure}{0.67\linewidth}
        \centering
        \includegraphics[width=\linewidth]{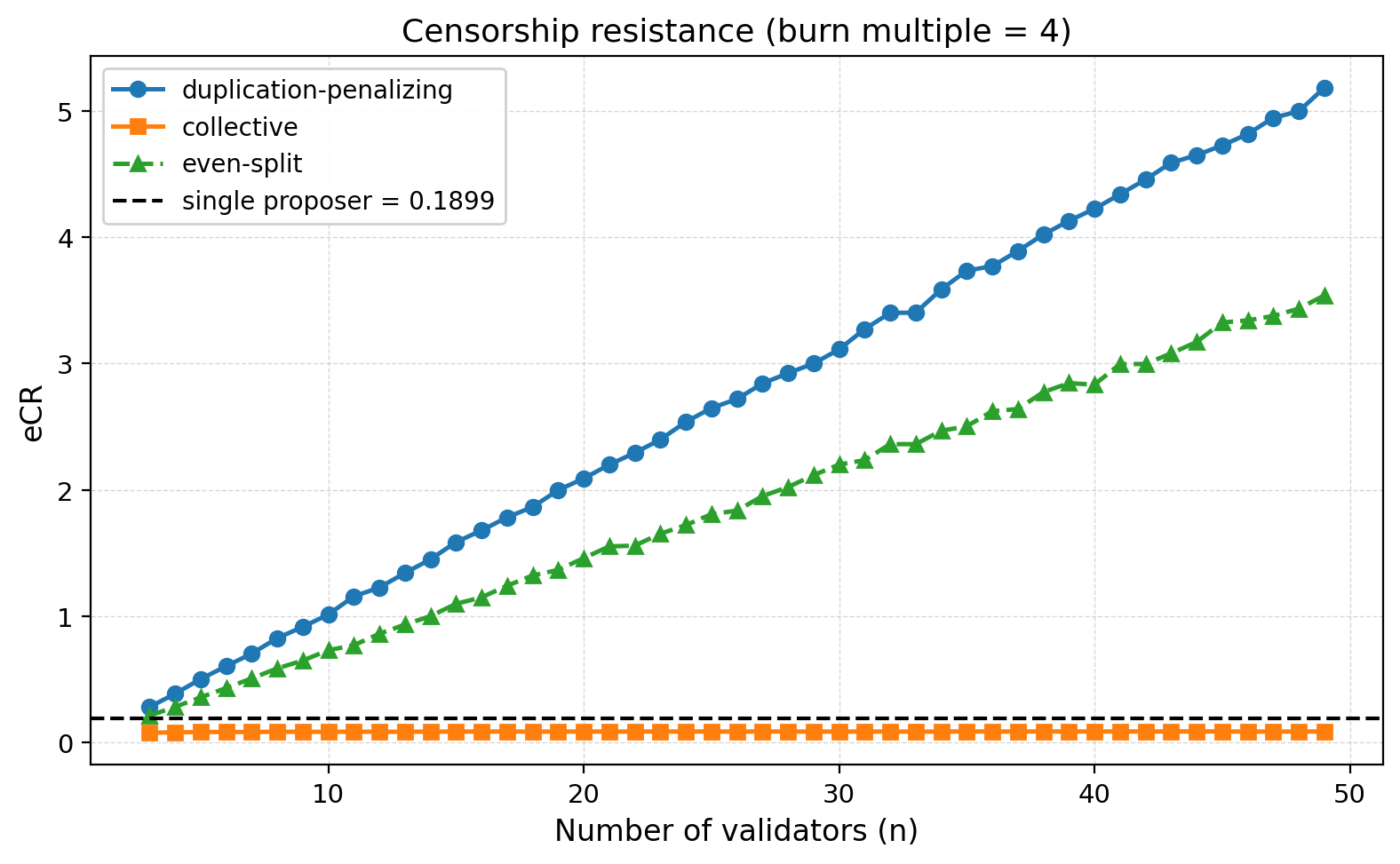}
        \caption{A burn multiple of 4}
        \label{fig:burn_a}
    \end{subfigure}
    \begin{subfigure}{0.67\linewidth}
        \centering
        \includegraphics[width=\linewidth]{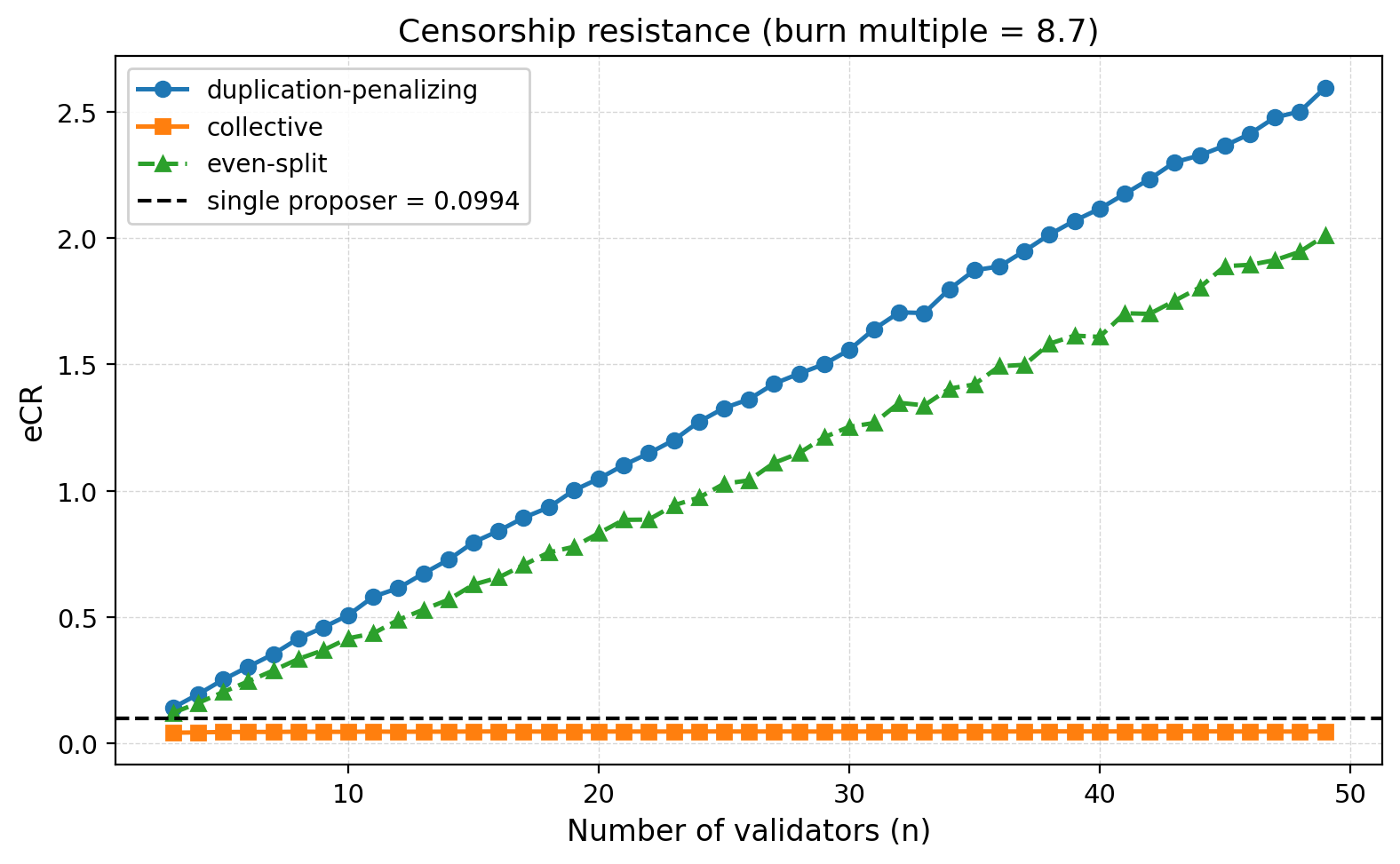}
        \caption{A burn multiple of 8.7}
        \label{fig:burn_b}
    \end{subfigure}
    \caption{Comparing eCR assuming that the burn is either 4x or 8.7x the average tip. We use the baseline parameters of $m = 250, B = 200, s=0.9$, and the bids are drawn iid from $\textsc{Pareto}(1, 2.5)$.}
    \label{fig:burn}

\end{figure}

Though burn significantly hampers the eCR, we see that eCR still increases roughly linearly with $n$, and so increasing the number of validators can still result in high eCR. For example, with a burn multiple of 8.7 and 49 validators, the eCR is approximately 2.5, so an adversary would need to pay 2.5x more than the user to censor the average transaction. The figure also plots the single-proposer baseline, which suffers greatly as burn increases. At a burn multiple of 8.7, the single-proposer eCR is approximately 0.09, so an adversary can censor the average transaction by paying only 9\% of what the user pays. 

Further, note that increasing the burn brings the duplication-penalizing and even-split TFMs closer together. With no burn and 49 validators, the eCR was approximately 35 and 10, for the duplication-penalizing and even-split TFMs respectively. At a burn multiple of 8.7, the eCR was approximately 2.5 and 1.8. This is because the burn is a flat fee charged to all transactions, and so the differences in user payments under the two TFMs are diluted. 

\subsection{Congestion and Adversary Incentives}
Here, we evaluate the impact of congestion and the adversary's goal on the censorship resistance. We only consider the duplication-penalizing TFM, as it beats the other two in both eCR and throughput.

For congestion, with the baseline parameters, we have 250 transactions with a block size of 200. In \autoref{fig:congestion}, we plot the eCR for both more congested and less congested cases. We fix the block size at 200, and vary the number of transactions. As congestion increases, the eCR goes down, because the opportunity cost of censoring a transaction decreases (it's more likely that there is another good transaction to take its place). 

\begin{figure}[t]
    \centering
    \includegraphics[width=0.7\linewidth]{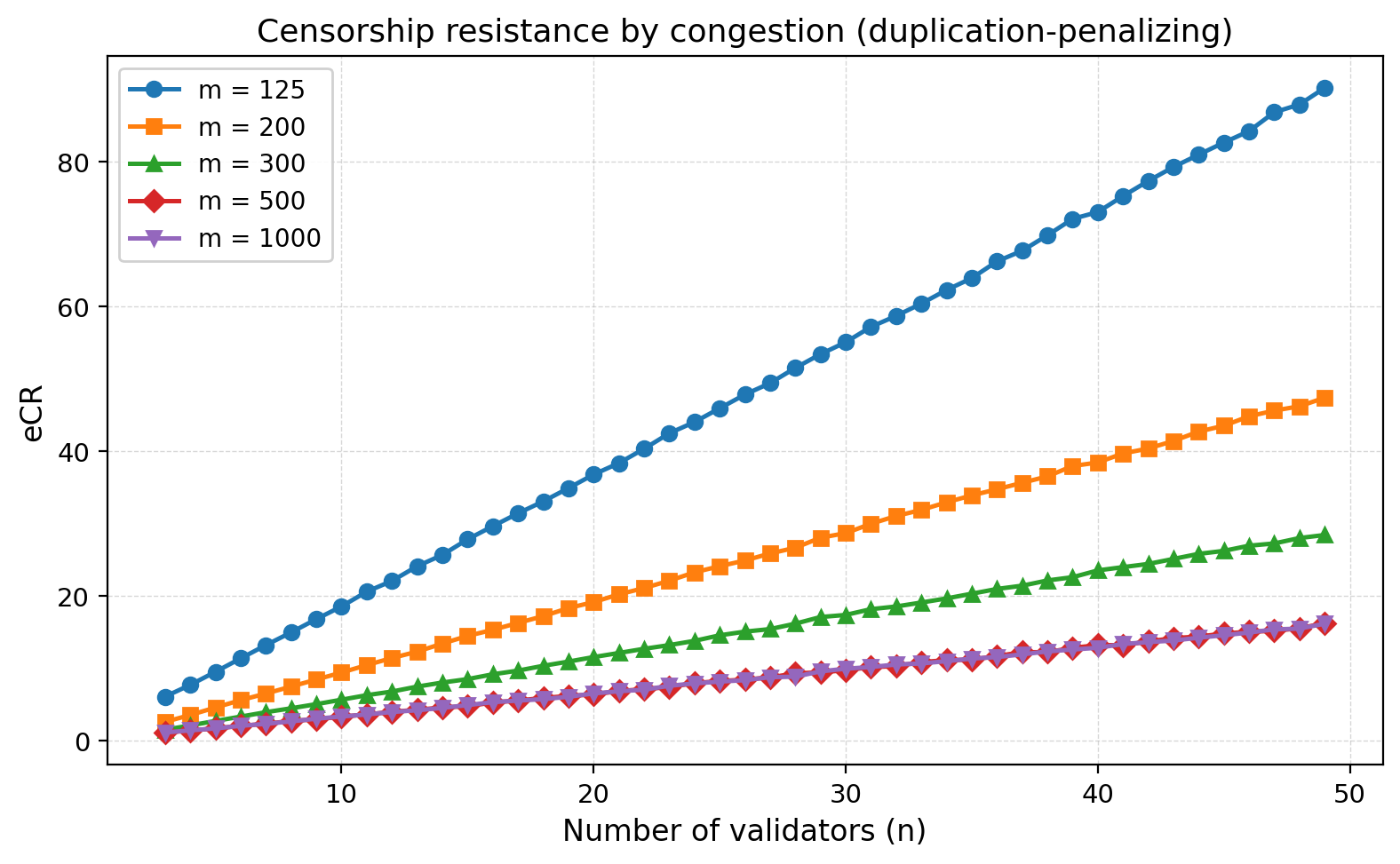}
    \caption{Comparing eCR as congestion varies, using the baseline parameters of $B = 200, s=0.9$, and bids drawn iid from $\textsc{Pareto}(1, 2.5)$. The number of transactions $m$ varies in the different curves.}
    \label{fig:congestion}
\end{figure}

In \autoref{fig:adv_succ}, we plot the eCR for a range of adversary censorship probabilities. Our baseline assumes that the adversary wants to censor a transaction with 90\% probability; we plot for both more lenient and more strict adversaries. As the adversary's desired censorship probability decreases, the eCR also decreases, as smaller bribes are required.

\begin{figure}[t]
    \centering
    \includegraphics[width=0.7\linewidth]{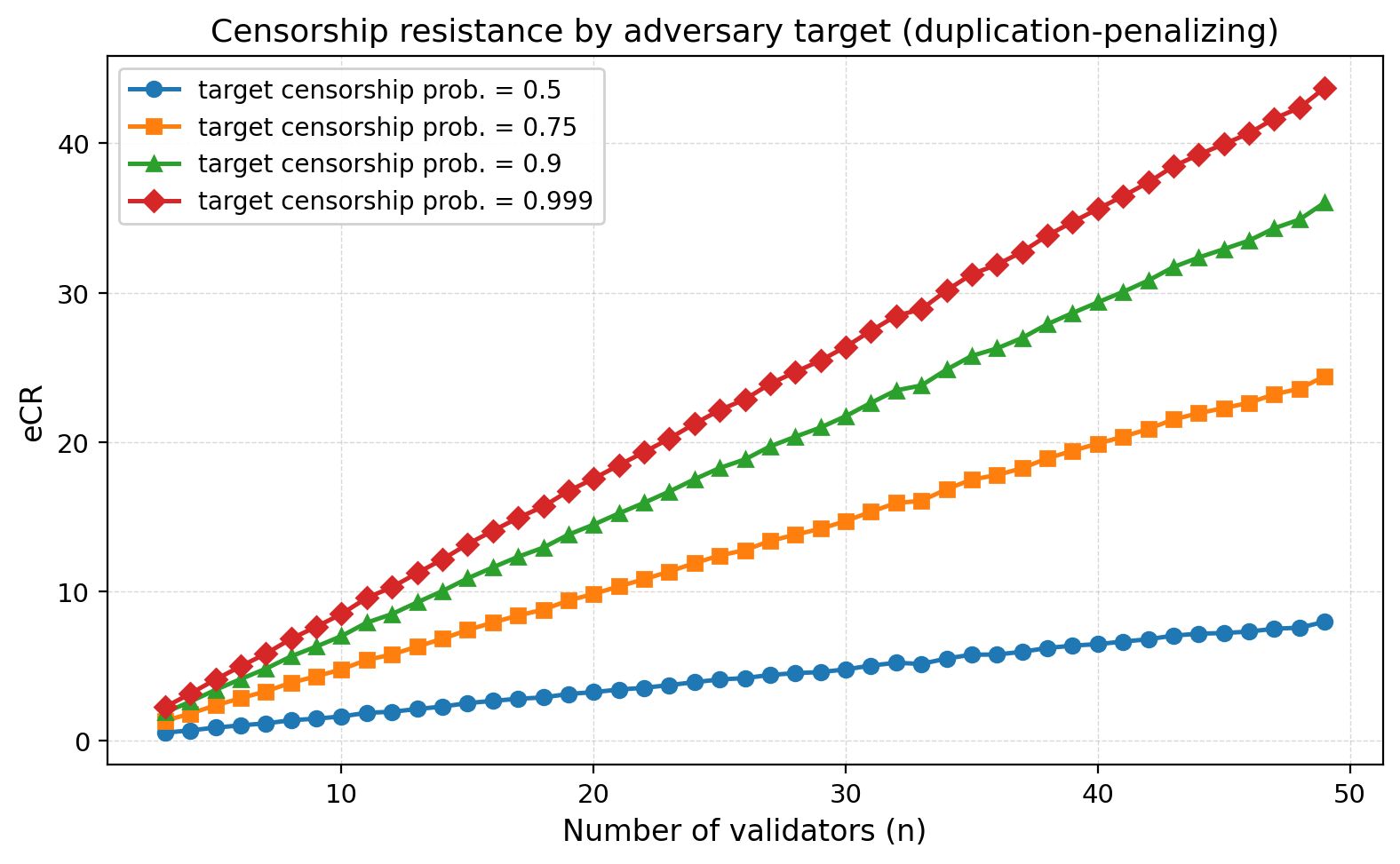}
    \caption{Comparing eCR as the adversary's censorship probability varies, using the baseline parameters of $B = 200, m = 250$, and bids drawn iid from $\textsc{Pareto}(1, 2.5)$. The adversary's censorship probability varies in the different curves.}
    \label{fig:adv_succ}
\end{figure}

\subsection{Analysis on Empirical Data}

\newcommand{\Block}{\mathcal{B}}
\newcommand{\eCR}{\mathsf{eCR}}

We next perform the comparison on real Ethereum transaction data. We use a window of $10{,}000$ recent Ethereum blocks, covering block numbers $24{,}341{,}338$ through $24{,}351{,}341$. For each block $\Block$, we compute each transaction's tip as $\texttt{gasPrice} - \texttt{baseFeePerGas}$ and keep only transactions with strictly positive tip, since these are the transactions that are afforded any eCR. This yields a block-specific set of valid transactions, whose size we denote by $m_{\Block}$.

For each block $\Block$, we fix the target total block size at $B_{\Block} = 0.8\, m_{\Block}$, so that the protocol aims to include roughly $80\%$ of the positive-tip transactions in that block. We then vary the number of validators $n$ from $2$ to $50$, and for each value of $n$ set the per-validator probability-mass budget to $\ell_{\Block} = B_{\Block} / n$, so that total capacity $n\ell_{\Block}$ remains fixed as $n$ changes. We run this procedure for the three TFMs in our study: duplication-penalizing, collective, and even-split.

Given the positive-tip transactions of a block $\Block$ and a choice of $n$, we first solve for the baseline single-proposer equilibrium probabilities $(p_i)_i$ under the relevant TFM. From these probabilities we compute throughput by taking the expected number of distinct transactions included by at least one of the $n$ validators and normalizing by the target block size. To measure economic censorship resistance, we take a sample of $20$ target transactions per block. For each sampled target transaction $\tx_i$, we use the cut-and-paste method to compute the per-validator bribe needed to achieve a target \emph{total} censorship probability of $90\%$. This gives a post-bribe target inclusion probability $p_i'$ and a per-censor bribe $b_i$, from which we compute
\[
\eCR_i = \frac{n(1-p_i')\, b_i}{P_i},
\]
where $P_i$ is the baseline expected user payment conditional on inclusion, in accordance with \cref{def:bribery-cost,def:ecnomic-cr-per-tx}.

Finally, for each block $\Block$ and each $n$, we average $\eCR_i$ over the sampled target transactions in that block, and separately record the corresponding throughput. We then average these block-level quantities over the full $10{,}000$-block dataset to obtain the curves shown in \autoref{fig:aft_eth_data_baseline}.

One caveat is in order: this dataset conditions on inclusion. Transactions that were successfully censored never appear on-chain, so the empirical tip distribution may understate the mass of low-tip or contested transactions. We therefore view these experiments as evaluating the TFMs on realistic bid shapes, rather than as a measurement of censorship itself.

\begin{figure}
    \centering
    \includegraphics[width=0.7\linewidth]{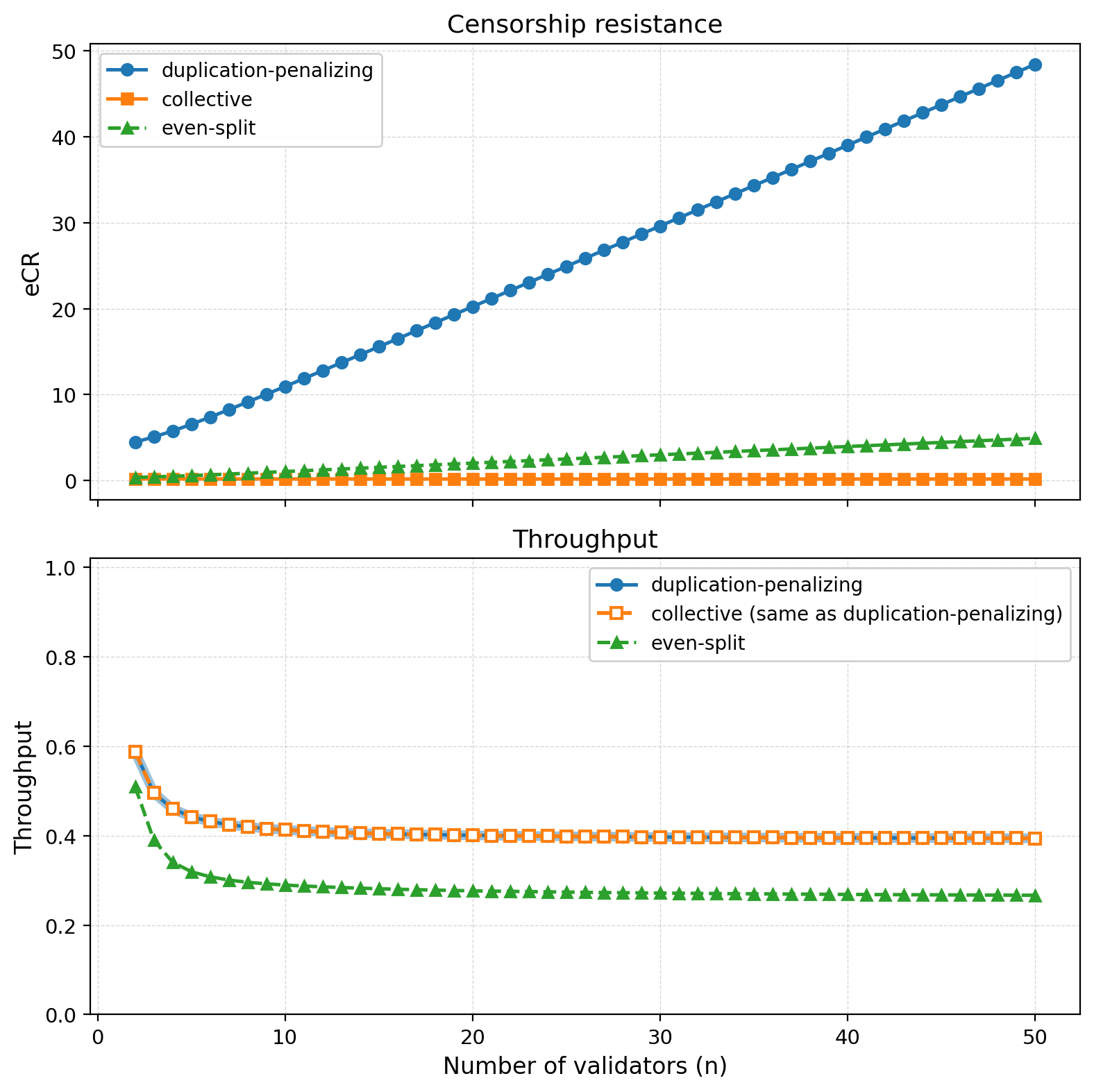}
    \caption{Comparison of three TFMs on recent Ethereum block data, with \(B = 0.8m\), \(s = 0.9\), and bids given by observed positive transaction tips.}
    \label{fig:aft_eth_data_baseline}
\end{figure}

\paragraph{Takeaways from empirical analysis.}
Similar to the analysis on the synthetic data, the analysis on the empirical data shows that the duplication-penalizing TFM dominates the other two in both throughput and economic censorship resistance. The main difference between the empirical data and our synthetic data is that the empirical tip distribution is more skewed; a larger percentage of the tips are closer to 0 in reality. This reduces the throughput, because the probability mass concentrates on the few high value tips. Lower throughput means more duplications, and so the duplication-penalizing TFM has much higher censorship resistance, since the average user payment is low. The link to reproduce any empirical result can be found at \href{https://zenodo.org/records/21831640}{zenodo record number 21831640}.

\section{Conclusion}
\label{sec:conclusion}
We proposed a novel definition of economic censorship resistance (eCR), and modeled the incentives of rational proposers in the presence of a censoring adversary. We found an efficient algorithm to determine the equilibrium of our game, and our algorithm works for a variety of TFMs. We used this algorithm to compare the eCR and throughput of three TFMs on real and synthetic data. Overall, we found that:
\begin{enumerate}
    \item MCP systems enjoy much higher eCR than single proposer systems
    \item Burnt user fees can significantly lower eCR
    \item eCR scales linearly with the number of validators, and thus strong eCR can be achieved even in the presence of significant burn
    \item Of the TFMs we study, the duplication-penalizing TFM dominates in both throughput and eCR, across a wide range of parameters
\end{enumerate}

We suggest three avenues for future work. The first is to analyze a wider range of TFMs. In particular, one may be able to prove that the duplication-penalizing TFM is optimal among a family of TFMs, or to prove some theoretical upper bounds on censorship, as we informally did for throughput. A user-side equilibrium analysis---endogenous tips, and robustness to user--proposer collusion under the duplication-penalizing TFM (\cref{rem:user-side})---is another natural target. This would complement our empirical analysis, and provide guidance on how to best implement an MCP system.

Second, one could analyze different bribery attacks. We consider a simple setting, where the censoring-adversary must declare a transaction to be censored, along with a bribe for any proposer who doesn't include that transaction. But censoring behavior may take other forms. For example, the bribe could be paid out only if \textit{no} proposer includes that transaction. 

Finally, we assume that all proposers have access to a shared mempool. But some proposers may have private access to some transactions. The simplest example of this is private order flow, but this may also occur for geographical reasons. For example, proposers located far from the rest of the network may receive transactions from their region before the rest of the network, and thus have semi-private access in some timing windows. Our model, and the duplication-penalizing TFM in particular, heavily rewards proposers with private transactions. Thus, MCP systems may serve as a counterbalance to the latency pressures that encourage geographical co-location.

\bibliography{refs}

\end{document}